\def\blfootnote{\xdef\@thefnmark{}\@footnotetext}
\newcommand{\A}{\mathcal{A}}
\newcommand{\B}{\mathcal{B}}
\newcommand{\D}{\mathcal{D}}
\newtheorem{definition}{Definition}
\title{Windowed Prophet Inequalities}
\date{\today}
\author{William Marshall\footnote{wfm@stanford.edu} , Nolan Miranda\footnote{mirandan@stanford.edu} , Albert Zuo\footnote{azuo@stanford.edu}}
\begin{document}
\maketitle
\blfootnote{Research supported in part by NSF grant CCF-1763311}
\begin{abstract}
    The prophet inequalities problem has received significant study over the past decades and has several applications such as to online auctions. In this paper, we study two variants of the i.i.d. prophet inequalities problem, namely the windowed prophet inequalities problem and the batched prophet inequalities problem. For the windowed prophet inequalities problem, we show that for window size $o(n)$, the optimal competitive ratio is $\alpha \approx 0.745$, the same as in the non-windowed case. In the case where the window size is $n/k$ for some constant $k$, we show that $\alpha_k < WIN_{n/k} \le \alpha_k + o_k(1)$ where $WIN_{n/k}$ is the optimal competitive ratio for the window size $n/k$ prophet inequalities problem and $\alpha_k$ is the optimal competitive ratio for the $k$ sample i.i.d. prophet inequalities problem. Finally, we prove an equivalence between the batched prophet inequalities problem and the i.i.d. prophet inequalities problem.
\end{abstract}
\section{Introduction}
Recently, there has been renewed interest in the prophet inequalities problem for its applications in online auctions. In this problem, a gambler is presented with $n$ distributions $\D_1, \ldots, \D_n$ offline. Then, online, the gambler sees samples $X_1, \ldots, X_n$ drawn from $\D_1, \ldots, \D_n$ one at a time. Upon seeing $X_i$, the gambler must immediately decide whether to accept that sample, receiving reward $X_i$ and ending the game, or to reject it irrevocably and go on to see $X_{i+1}$. The goal of the gambler is to maximize her expected reward, and her performance is measured against a ``prophet'' who is able to see all the samples at once and deterministically choose the largest sample in the sequence. The ratio of the gambler's reward to the prophet's reward is known as the competitive ratio. 

This problem is commonly motivated by applications to online auction theory. A seller (playing the role of the gambler) attempts to sell a single item. Potential buyers approach the seller one at a time and offer a bid for the item, which the seller must immediately either accept or reject. However, this setup is too restrictive for many real world scenarios. In particular, it seems reasonable a buyer might make her bid available for several time steps before she decides to buy the item elsewhere. Accordingly, we define a relaxation of the i.i.d. prophet inequalities problem, which we call the \emph{window size $k$ prophet inequalities problem}. In this variant, at time step $i$ the gambler may choose to accept any of $X_{i-k+1}, \ldots, X_i$. Similarly, we also define a related variant the \emph{batch size $k$ prophet inequalities problem}. In this variant, the gambler receives $n$ i.i.d. samples from $\D$, but receives $k$ samples at a time. The gambler must decide whether to accept any one of the available samples, ending the game, or to reject all $k$ of the available samples and receive the next batch of $k$ samples. In addition to being another natural relaxation of the i.i.d. prophet inequalities problem, the batched formulation is useful in studying the windowed prophet inequalities problem. 

\subsection{Our Contributions}

Our work has two primary contributions. First, we formulate the \emph{batch size k prophet inequalities problem}, derive its optimal competitive ratio in terms of $k$, and identify distributions for which the gambler can not improve on this competitive ratio. Second, we formulate the \emph{window size $k$ prophet inequalities problem}, an i.i.d. variant of the model used by \cite{lookahead}, derive its optimal competitive ratio when the window size is $o(n)$, and bound its competitive ratio for window sizes that are a constant fraction of the input length.

To start with, we show an equivalence between the gambler's performance in the batched setting and the standard i.i.d. setting (with fewer samples).

\begin{definition}
Let $\alpha_i$ be the optimal competitive ratio for the i.i.d prophet inequalities problem on $i$ samples (as derived in \cite{HK82}).
\end{definition}

\begin{restatable}{thm}{batch}
\label{thm:batch}
Let $BATCH_k$ be the optimal competitive ratio for batch size $k$ prophet inequalities problem. Then for any integer $k \ge 1$,
\[BATCH_{n/k} = \alpha_k.\]
\end{restatable}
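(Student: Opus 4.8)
The plan is to exhibit a reward-preserving correspondence between the batch size $n/k$ problem on a distribution $\D$ and the $k$-sample i.i.d.\ problem on an associated distribution $\D'$, and then promote this to an equality of optimal competitive ratios. First I would note that batch size $n/k$ yields exactly $k$ batches, and that since the gambler who stops in a batch may claim any of its members, the only payoff-relevant quantity in batch $j$ is its maximum $Y_j$, the maximum of the $j$-th block of $n/k$ samples. These $Y_j$ are i.i.d.\ with common CDF $F' = F^{n/k}$, where $F$ is the CDF of $\D$; write $\D'$ for their law. Crucially, the prophet's value is preserved: $\mathbb{E}_{\D}[\max_{i \le n} X_i] = \mathbb{E}_{\D'}[\max_{j \le k} Y_j]$, because the maximum over all $n$ samples is exactly the maximum of the $k$ block maxima.

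The heart of the argument is to show that the extra within-batch information available to the batch gambler is useless, so that the optimal batch reward on $\D$ equals the optimal $k$-sample i.i.d.\ reward on $\D'$. I would establish this by backward induction on the batch index. Let $\mathcal{F}_j$ denote the $\sigma$-algebra generated by the first $j$ batches. Because the batches are i.i.d., the continuation value at batch $j$ --- the best expected reward obtainable from batches $j+1, \ldots, k$ --- is a deterministic constant $c_{j+1}$ depending only on the number of remaining batches and independent of $\mathcal{F}_j$. Hence the optimal policy stops at batch $j$ precisely when $Y_j \ge c_{j+1}$, a threshold rule that reads only the block maximum $Y_j$ and discards the rest of the batch. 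This identifies the optimal batch policies with the optimal $k$-sample i.i.d.\ threshold policies and gives $\mathrm{OPT}^{\mathrm{batch}}_{n/k}(\D) = \mathrm{OPT}^{\mathrm{iid}}_{k}(\D')$.

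Combining the two observations, the competitive ratio attained on $\D$ in the batch problem equals the competitive ratio attained on $\D'$ in the $k$-sample problem. To conclude equality of the \emph{optimal} ratios, I would verify that the map $\D \mapsto \D'$ is surjective onto all distributions: given any target CDF $F'$, the function $F = (F')^{k/n}$ is again a valid CDF (monotone, right-continuous, with limits $0$ and $1$), and its $n/k$-fold maximum has CDF $F'$. Thus as $\D$ ranges over all inputs to the batch problem, $\D'$ ranges over all inputs to the $k$-sample problem, and taking the infimum of the identical per-distribution ratios over each side yields $BATCH_{n/k} = \alpha_k$.

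I expect the main obstacle to be the backward-induction step: one must argue carefully that enlarging the filtration from $(Y_1, \ldots, Y_j)$ to the full within-batch history cannot raise the optimal stopping value, which rests on the independence of the discarded within-batch data from every future batch. A secondary technical point is the implicit assumption $k \mid n$, so that $n/k$ is integral; since the number of batches is $k$ for every such $n$, both the reduction and the realizability construction $F = (F')^{k/n}$ go through uniformly, consistent with the right-hand side $\alpha_k$ being independent of $n$.
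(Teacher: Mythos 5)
Your proof is correct, and while it rests on the same core reduction as the paper's---trading the batch size $n/k$ problem on $\D$ for the $k$-sample i.i.d.\ problem on the law $\D'$ of the block maxima---it executes both halves genuinely differently. First, your realizability step inverts the block-maximum map for \emph{arbitrary} distributions in one line, via the CDF root $F = (F')^{k/n}$; the paper instead proves this as Lemma~\ref{lemma:probs}, by induction on the support of a discrete finite-support distribution, peeling off the top atom and solving $(1-p'_{\max})^{n/k} = 1-p_{\max}$, which suffices for its purposes only because the distributions it needs to invert (the Hill--Kertz extremal ones) are discrete with finite support. Second, your upper bound comes from a direct per-instance equality of optimal values (justified by backward induction: batches are independent, so the continuation value is a deterministic constant and within-batch information is worthless), after which $BATCH_{n/k} = \alpha_k$ follows by taking infima over the instance correspondence; the paper instead argues by contradiction, pushing the Hill--Kertz hard distributions for $k$ samples through Lemma~\ref{lemma:probs}, simulating the optimal batch algorithm in the standard i.i.d.\ setting, and then taking a limit along the extremal sequence. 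Your route is more self-contained and more general: it never invokes extremal distributions and needs no limiting argument, only the definition of $\alpha_k$ as a worst-case ratio. The paper's route buys something too---it produces explicit hard instances for the batched problem as a byproduct, which it records as a corollary and reuses in the windowed lower bound of Section~\ref{section:constantfrac}. Finally, the step you flag as the main obstacle (that enlarging the filtration by within-batch data cannot raise the optimal stopping value) is exactly the fact the paper asserts without proof in its preliminaries, namely that the optimal batch rule thresholds the batch maximum against the value-to-go; your backward-induction justification via independence is the right argument and fills in something the paper leaves implicit.
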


Hill and Kertz conjectured that $\alpha_k$ are monotonically decreasing, which would imply that for any integer $k$, the gambler can do strictly better in the batch size $n/k$ setting than the standard i.i.d. size $n$ setting \cite{HK82}. However, as discussed in Section \ref{sec:batch}, this is not the case when the batch size is $o(n)$.

Next, we move on to windowed prophet inequalities. We show that allowing small window sizes does not help the gambler.

\begin{restatable}{thm}{windowon}
\label{thm:o(n)}
For any $k = o(n)$, the optimal competitive ratio for the window size $k$ prophet inequalities problem is $\alpha$.
\end{restatable}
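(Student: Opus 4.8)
The plan is to squeeze the windowed problem between two instances of the batched problem of Theorem~\ref{thm:batch}, so that the window ratio is literally sandwiched between two batch ratios that both converge to $\alpha$. Throughout, write $W(\D)$, $\mathrm{Bat}_B^{\sigma}(\D)$, and $P(\D)$ for the optimal window-$k$ value, the optimal batch-$B$ value under the block alignment shifted by $\sigma$, and the prophet's value $\mathbb{E}[\max_i X_i]$, respectively, and recall $WIN_k = \inf_{\D} W(\D)/P(\D)$. For the lower bound, a window-$k$ gambler can simulate any batch-$k$ strategy: at each time $tk$ her window is exactly the block $\{X_{(t-1)k+1},\dots,X_{tk}\}$, so restricting her decisions to the times $k,2k,\dots$ recovers the batch-$k$ gambler. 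Hence $W(\D)\ge \mathrm{Bat}_k^{0}(\D)$ for every $\D$, and taking the infimum of $W(\D)/P(\D)$ gives $WIN_k\ge BATCH_k=\alpha_{n/k}$, which tends to $\alpha$ because $k=o(n)$ forces $n/k\to\infty$. All the real work is in the matching upper bound.

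For the upper bound I would fix an intermediate scale $B$ with $k\ll B\ll n$ (for instance $B=\sqrt{kn}$, so that $k/B\to 0$ and $n/B\to\infty$) and prove the domination inequality
\[ W(\D)\ \le\ \frac{1}{1-k/B}\,\max_{0\le \sigma<B}\ \mathrm{Bat}_B^{\sigma}(\D)\qquad\text{for every }\D. \]
Granting this, I would evaluate it at the hard instances $\D_n$ supplied by Theorem~\ref{thm:batch} (equivalently, the $(n/B)$-sample Hill--Kertz instances lifted to blocks), for which the batch-$B$ ratio tends to $\alpha_{n/B}$. Since $n/B\to\infty$ gives $\alpha_{n/B}\to\alpha$ and $1/(1-k/B)\to 1$, this yields $W(\D_n)/P(\D_n)\le \alpha+o(1)$, and because $WIN_k$ is an infimum over distributions we conclude $WIN_k\le\alpha+o(1)$. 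Combined with the lower bound this gives $WIN_k\to\alpha$.

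The heart of the argument, and the step I expect to be the main obstacle, is the domination inequality: intuitively, a window of $k$ consecutive indices straddles a block boundary for only a $(k-1)/B$ fraction of shifts, so a batched gambler loses almost nothing. Concretely, fix the optimal window strategy and let $T$ be its (random) stopping time and $I\in\{T-k+1,\dots,T\}$ the accepted index, so the relevant window is the set $\{I,\dots,T\}$ of at most $k$ consecutive positions. For block size $B\ge k$ with uniform random shift $\sigma$, at most one boundary can lie inside this window, and it does so for at most $k-1$ of the $B$ residues; hence, conditionally on the sample sequence, $\Pr_{\sigma}[\text{window straddles a boundary}]\le (k-1)/B$. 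When the window lies inside a single block $C_s$, a batch-$B$ gambler reproduces the acceptance exactly: she simulates the window strategy, and upon reaching $C_s$ (whose right endpoint is $\ge T$, so every sample the window strategy consulted is available) she accepts $X_I\in C_s$; non-anticipation and the ``accept at most once'' requirement are immediate because the window strategy accepts exactly once. Thus the simulating gambler collects $X_I$ except on the straddling event, so $\mathbb{E}_{\sigma}[\mathrm{Bat}_B^{\sigma}(\D)]\ge \bigl(1-(k-1)/B\bigr)W(\D)$, and averaging over $\sigma$ produces the displayed bound.

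Two technical points must be dispatched. First, a shifted alignment creates partial blocks of size $<B$ at the two ends; these involve only $O(B)=o(n)$ samples and change the block count by $O(1)$, so by the i.i.d.\ translation invariance underlying Theorem~\ref{thm:batch} they perturb $\alpha_{n/B}$ by $o(1)$, which is absorbed into the error term. Second, one must confirm that on the extremal instances the prophet's value is not concentrated on those $o(n)$ boundary samples; this holds because the hard instances are ``spread out'' (their maxima converge to a Poisson-type process, a structural fact already used to establish Theorem~\ref{thm:batch}), so no $o(n)$ block of samples carries a constant fraction of $P(\D_n)$. Assembling the lower bound, the domination inequality, and Theorem~\ref{thm:batch} then gives $WIN_k=\alpha$ for every $k=o(n)$.
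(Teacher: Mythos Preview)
Your approach is essentially the paper's: both sandwich the windowed gambler between batched gamblers with block size $B$ satisfying $k\ll B\ll n$, and both control the failure event (the window strategy accepting across a block boundary) by a random offset that makes its probability at most $(k-1)/B$. The only difference is where the randomness lives. The paper absorbs it into the window algorithm via Lemma~\ref{lemma:uniform}: shift the algorithm's start by a uniform $s\in\{1,\dots,b\}$ so that the accepted index is uniform $\bmod\ b$, and keep the batches aligned at $\sigma=0$. You instead keep the window algorithm fixed and uniformly shift the block boundaries. These are dual couplings of the same reduction; the paper's version has the mild advantage that the batch game it compares to is always the standard aligned one, so Corollary~\ref{cor:batcho(n)} applies verbatim without having to reason about partial end-blocks.

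One correction to your write-up: your justification of the second ``technical point'' is wrong. Theorem~\ref{thm:batch} is proved by the purely algebraic reduction of Lemma~\ref{lemma:probs} and invokes no Poisson-type limit or ``spread-out'' structure of the extremal instances, so you cannot appeal to it here. What you actually need is that, on the hard distribution $D'$ (the max of a size-$B$ block), the batch gambler facing $m+1$ possibly partial blocks cannot beat the batch gambler facing $m$ full blocks by more than $o(1)\cdot E_m(D')$. This follows directly from the concavity of $j\mapsto V_j(D')$: since $V_{j+1}-V_j=\mathbb{E}[(X-V_j)^+]$ is nonincreasing in $j$, one gets $V_{m+1}(D')\le(1+1/m)V_m(D')$, hence
\[
\frac{\max_\sigma \mathrm{Bat}_B^\sigma(\D_n)}{P(\D_n)}\ \le\ \frac{V_{m+1}(D')}{E_m(D')}\ \le\ \Bigl(1+\tfrac{1}{m}\Bigr)\,\alpha_m\ =\ \alpha+o(1).
\]
With that patch your argument is complete and matches the paper's.
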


Our main technical result states that if the window size is instead a constant fraction of the total number of samples, then the gambler can achieve an improved competitive ratio, but this improvement is limited.

\begin{restatable}{thm}{windowcf}
\label{thm:windowcf}
Let $WIN_k$ denote the optimal competitive ratio for the window size $k$ prophet inequalities problem. Then $\alpha_k < WIN_{n/k} \le \alpha_k + o_k(1)$.
\end{restatable}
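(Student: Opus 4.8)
The plan is to prove the two inequalities separately, relying on Theorems \ref{thm:batch} and \ref{thm:o(n)}. For the lower bound $WIN_{n/k} > \alpha_k$, I would first observe that a windowed gambler can simulate any batched gambler: partition the $n$ arrivals into $k$ consecutive blocks of length $n/k$ and act only at block boundaries, so that the size-$(n/k)$ window coincides exactly with the current block and the windowed gambler reproduces any batch size $n/k$ strategy. Hence for every distribution $\D$ the optimal windowed value is at least the optimal batched value, and taking the infimum over $\D$ with Theorem \ref{thm:batch} gives $WIN_{n/k} \ge BATCH_{n/k} = \alpha_k$.

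To upgrade this to a strict inequality I would argue that the infimum defining $WIN_{n/k}$ is attained (after normalizing, the relevant family of distributions is compact and the optimal windowed ratio is continuous on it) at some $\D^\ast$, which is non-degenerate since $WIN_{n/k} \le \alpha_k + o_k(1) < 1$. On $\D^\ast$ I would exhibit a windowed strategy that weakly dominates the optimal batched strategy pointwise and strictly improves on it with positive probability: follow the optimal batched rule, but when a large value appears near the end of a block and the next block opens poorly, use the overlapping window to recover that value one step into the next block, which the batched gambler, having committed at the boundary, cannot. Its expected value therefore strictly exceeds the optimal batched value at $\D^\ast$, and since the batched ratio is always at least $\alpha_k$, we conclude $WIN_{n/k} = (\text{windowed ratio at } \D^\ast) > \alpha_k$.

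For the upper bound it suffices to produce a \emph{single} distribution on which no windowed gambler beats $\alpha_k + o_k(1)$. I would make each of the $n$ i.i.d.\ samples a \emph{candidate} --- an independent draw from an auxiliary distribution $G$ --- with probability $k/n$, and negligible otherwise, so that there are $\approx k$ candidates and the prophet's value equals (up to a negligible additive term) the maximum candidate. The point of this sparsification is that a window of length $n/k$ contains, with high probability, only $w' = O(\log k / \log\log k) = o(k)$ candidates (a scan-statistic estimate: the candidate content changes at $\approx k$ positions and each window sees $\mathrm{Poisson}(1)$ of them). Consequently the windowed gambler can be simulated by a window size $w'$ gambler acting on the i.i.d.\ candidate sequence of length $\approx k$, since the only non-negligible values she can accept are candidates and at most $w'$ of them lie in her window at any time.

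Finally I would take $G$ to be the worst-case distribution for the window size $w'$ prophet problem on $\approx k$ samples. Because $w' = o(k)$, Theorem \ref{thm:o(n)} (applied with ``$n$'' $=k$) gives that the optimal competitive ratio of that problem is $\alpha + o_k(1)$, and $\alpha = \alpha_k + o_k(1)$ since $\alpha_k \downarrow \alpha$; hence on this $G$ the window-$w'$ gambler obtains at most $(\alpha_k + o_k(1))$ times the prophet. Combining this with the simulation, and absorbing the low-probability event that some window is overloaded into the negligible error, yields $WIN_{n/k} \le \alpha_k + o_k(1)$. I expect this upper bound to be the main obstacle: the delicate points are controlling the number of candidates per window uniformly and arguing that the windowed gambler gains nothing beyond a window-$o(k)$ view of the candidate sequence, so that the problem collapses onto the already-understood small-window regime of Theorem \ref{thm:o(n)}.
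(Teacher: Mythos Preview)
Your lower bound has a real gap. The compactness claim is false: the infimum defining $\alpha_k$ (and hence $WIN_{n/k}$) is \emph{not} attained. Hill and Kertz's extremal distributions form a sequence whose competitive ratios approach $\alpha_k$ but never reach it; as the parameter $\delta \to 0$ these distributions put mass at arbitrarily large values with vanishing probability, so no normalization places them in a compact family on which the ratio is continuous. Without an attained minimizer $\D^\ast$, showing ``windowed strictly beats batched on each fixed $\D$'' yields only $WIN_{n/k} \ge \alpha_k$, not a strict inequality --- the per-distribution gap could shrink to zero along the extremal sequence. The paper confronts exactly this difficulty: it proves a \emph{uniform} lower bound on the windowed-minus-batched improvement via the discretization $X \mapsto X_0^{V_1(X)}$ (Lemmas~\ref{lemma:2.2_extended}--\ref{lemma:2.4-extended}), splitting into cases based on how much mass $X$ places near $0$ and near $V_1(X)$ and handling each case explicitly. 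Your one-line improvement (``recover a value one step into the next block'') is also not the mechanism the paper uses; the paper instead lets the batch subroutine accept some $X^\ast$, then holds $X^\ast$ until it is about to leave the window and takes the window maximum --- which with positive probability exceeds $X^\ast$ by a uniformly bounded amount.

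Your upper bound route is genuinely different from the paper's and is plausible. The paper works directly with the $l$-sample extremal distribution (for $l \approx \sqrt{k}$), simulates the windowed gambler by a batch size $n/l$ gambler, and controls the failure event using the explicit Hill--Kertz probability $p = (\gamma/(l-1))^{1/l}$ that a batch is all zeros. Your sparsify-and-reduce-to-Theorem~\ref{thm:o(n)} idea sidesteps the extremal calculations, which is attractive, but the simulation step needs more care than you indicate. Two points in particular: (i) $w'$ is random, so $G$ must be chosen against a \emph{deterministic} bound $w'' = O(\log k / \log\log k)$, and you must absorb the event $\{w' > w''\}$ into the $o_k(1)$; (ii) the window-$w''$ gambler on the candidate sequence does not see the candidate \emph{positions}, while the original windowed gambler does --- you need to argue she can generate them (their conditional law given the values is known) and still run the simulation online, which involves handling the segment between $p_j$ and $p_{j+1}$ at candidate time $j+1$ rather than $j$. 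These are fixable, but they are the content of the argument, not afterthoughts.
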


It is notable that the problems studied in this paper all involve drawing i.i.d. samples from a single distribution instead of drawing one sample each from $n$ distributions. If we instead allow the samples to be non-i.i.d, then even with batches or windows, it is impossible to achieve a competitive ratio that is better than $\frac{1}{2}$. Consider the case when $X_1 = 1$ deterministically, $X_2 = X_3 = \ldots = X_{n-1} = 0$ deterministically, and $X_n$ takes value $\frac{1}{\epsilon}$ with probability $\epsilon$ and 0 otherwise. Unless the gambler can see the first and last samples at the same time (i.e. the gambler has the same information as the prophet), she receives expected payoff 1, while the prophet receives expected payoff $2-\epsilon$.

\subsection{Related Work}
Krengel and Sucheston formalized and provided the first optimal stopping rule for the standard prophet inequalities problem \cite{KS78}. Following their work, others have provided optimal stopping rules with additional desirable properties. For instance, \cite{Sam84, KW12} proved that threshold-based stopping rules (which are simple to describe and implement) are optimal in the non-i.i.d. setting, and \cite{RWW} provide stopping rules with low sample complexity. 

Our work focuses on the i.i.d. setting, where Hill and Kertz provided a stopping rule that achieved a competitive ratio of $1-1/e$, as well as an upper bound on the competitive ratio, which later Kertz showed was equal to $\alpha \approx 0.745$ \cite{HK82, kertz86}. Hill and Kertz conjectured that their stopping rule was optimal, and improving upon $1-1/e$ remained an open problem for over 30 years until \cite{abolhassani_1-1/e} refuted their conjecture with a stopping rule that achieved a competitive ratio of $0.738$. Finally, \cite{correa_iid} closed the gap with a stopping rule that achieved competitive ratio $\alpha$, implying that for all $i$, $\alpha_i \ge \alpha$, thus showing that Hill and Kertz' upper bound is tight.

Several variants of the prophet inequalities problem have been examined, such as multiple items \cite{multiple_items}, matchings \cite{GW}, matroid constraints \cite{KW12}, altered sample ordering \cite{random_ordering, optimal_ordering}, and information constraints \cite{AKW, RWW}. Of particular relevance to our setting is \cite{uncertain_supply}, in which the authors examine a multi-item online auction model in which items are available for sale for a limited (and a priori unknown) amount of time. This work relaxes the assumption that the gambler can see all $n$ values if desired, which is complementary to our model, which relaxes the assumption that the gambler can only see a single value at a time. Additionally, some closely related variants of windowed prophet inequalities have been studied. In particular, in 1989 Boshuizen studied the non-i.i.d. windowed prophet inequalities problem, and derived an optimal stopping rule for the non-i.i.d. setting \cite{lookahead}. By extension, this stopping rule is also optimal in our i.i.d. windowed prophet inequalities problem, and we build on Boshuizen's results by explicitly analyzing the competitive ratio in the i.i.d. setting.

The secretary problem is another well known problem in stopping theory which is closely related to the prophet inequalities problem. Where as in the prophet inequalities problem the gambler attempts to maximize her expected reward, in the secretary problem she attempts to maximize the probability of selecting the largest sample in a randomly ordered stream. In this setting \cite{secretary_window} examine a sliding window input model similar to ours, and find and analyze the optimal stopping rule as a function of window size. Additionally, \cite{secretary_patience} study an even more general model where samples disappear after a random amount of time, and characterize optimal stopping rules in this setting. However, the differences between the prophet inequalities problem and the secretary problem make it difficult to apply techniques from these papers to our setting.

\section{Preliminaries}
In the problems studied in this paper, the distribution presented to the gambler is chosen by an \emph{offline adversary}. That is, the adversary chooses the input distribution and the number of samples to be drawn from that distribution. Although more powerful adversaries have been studied in some contexts, the prophet inequalities problem is typically studied with respect to an offline adversary.

For random variable $X$, let $E_k[X] = E[\max\{X_1, \ldots, X_k\}]$, i.e. the prophet's expected reward in the i.i.d. prophet inequalities problem. Let $V_k(X)$ be the gambler's expected reward in the i.i.d. prophet inequalities problem if her input is $k$ independent copies of $X$ and she plays optimally. The Prophet Inequalities problem can be solved by backwards induction. In particular, the optimal algorithm accepts $X_i$ if $X_i > V_{n-i}(X)$, implying that $V_n(X) = E[\max\{X, V_{n-1}(X)\}]$ for all $n > 1$ \cite{Brown72}. This same idea can be applied to solve the batched prophet inequalities problem as well. In particular, the optimal batch size $k$ algorithm accepts the largest sample in batch $i$ if this sample is greater than $V_{n/k - i} (\max\{X_1, \ldots, X_k\})$, i.e. if the sample is larger than the expected reward from declining the sample.

Throughout this paper, it will be useful for us to leverage distributions which are known to be hard in the i.i.d. prophet inequalities problem. In their 1982 paper, Hill and Kertz identified extremal distributions such that for any $n > 0$ and $\epsilon > 0$, there exists a distribution such that the competitive ratio for the i.i.d. Prophet Inequalities problem with $n$ samples from this distribution is at most $\alpha_n + \epsilon$. All such distributions are discrete distributions with finite support. We will refer to these distributions as \emph{extremal distributions} or \emph{hard distributions} for the prophet inequalities problem.

\section{Batched Prophet Inequalities}\label{sec:batch}
In this section we characterize the optimal competitive ratio for the batch prophet inequalities problem by relating it to the i.i.d. prophet inequalities problem. This relationship becomes quite intuitive once we relate input distributions for the two problems.

\begin{restatable}{lem}{lemmaBatchReduc}
Let $D$ be a discrete probability distribution with finite support and let $k$ be a positive integer. Then there exists $D'$ such that if $X'_1, \dots, X'_k$ are drawn independently from $D'$, then $\max\{X'_1, \dots, X'_k\}$ has the same distribution as $X$ drawn from $D$.\label{lemma:probs}
\end{restatable}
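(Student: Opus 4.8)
The plan is to work entirely with cumulative distribution functions and to exploit the fact that the CDF of a maximum of independent random variables is the product of their CDFs. Let $F$ denote the CDF of $D$, so that $F(x) = \Pr_{X \sim D}[X \le x]$. If $X'_1, \dots, X'_k$ are drawn independently from some distribution $D'$ with CDF $G$, then $\Pr[\max\{X'_1, \dots, X'_k\} \le x] = G(x)^k$, since the maximum is at most $x$ exactly when every $X'_i$ is at most $x$. Thus it suffices to find a valid CDF $G$ satisfying $G(x)^k = F(x)$ for all $x$, and the natural candidate is $G(x) = F(x)^{1/k}$.

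First I would verify that $G = F^{1/k}$ is a legitimate CDF. Since $t \mapsto t^{1/k}$ is continuous and strictly increasing on $[0,1]$, and $F$ is non-decreasing, right-continuous, with $\lim_{x \to -\infty} F(x) = 0$ and $\lim_{x \to \infty} F(x) = 1$, the composition $G$ inherits all of these properties. Hence $G$ is the CDF of a well-defined probability distribution, which we take to be $D'$.

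Next I would check that $D'$ is discrete with finite support, as required by the statement. Write the support of $D$ as $a_1 < \dots < a_m$ with positive probabilities $p_1, \dots, p_m$. Then $F$ is a step function that is constant except for jumps at the $a_j$, and because $t \mapsto t^{1/k}$ is continuous, $G = F^{1/k}$ is also constant away from the $a_j$ and jumps only there. Consequently $D'$ is supported on the same finite set $\{a_1, \dots, a_m\}$, assigning to $a_j$ the mass $G(a_j) - G(a_{j-1}) = F(a_j)^{1/k} - F(a_{j-1})^{1/k}$ (with the convention $G(a_0) := 0$); these masses are nonnegative by monotonicity of $G$ and sum to $G(a_m) = F(a_m)^{1/k} = 1$.

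Finally, by construction the CDF of $\max\{X'_1, \dots, X'_k\}$ equals $G^k = F$, so this maximum has exactly the distribution $D$, completing the argument. I do not anticipate a serious obstacle here: the entire content is the observation $G = F^{1/k}$, and the only point requiring care is confirming that raising a discrete finite-support CDF to the power $1/k$ again yields a discrete finite-support CDF, rather than introducing a continuous part or enlarging the support — which follows immediately from the continuity and monotonicity of $t \mapsto t^{1/k}$.
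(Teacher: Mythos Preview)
Your argument is correct and is in fact cleaner than the paper's. The paper proceeds by induction on the size of the support of $D$: it peels off the largest atom $v_{\max}$, chooses $p'_{\max}$ so that $(1-p'_{\max})^k = 1-p_{\max}$, and then applies the inductive hypothesis to the conditional distribution on the remaining support. You instead observe directly that the CDF of a maximum of $k$ i.i.d.\ draws is the $k$th power of the individual CDF, and hence that $G := F^{1/k}$ does the job in one stroke. The two constructions produce the same $D'$ (the paper's recursion unwinds to exactly your formula $G(a_j)-G(a_{j-1}) = F(a_j)^{1/k} - F(a_{j-1})^{1/k}$ for the mass at $a_j$), but your route avoids the inductive bookkeeping and makes the finite-support/discreteness of $D'$ transparent. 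The paper's inductive phrasing perhaps makes the ``reverse'' direction (building $D$ from $D'$) feel more symmetric, but that direction is trivial anyway.
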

\begin{proof}
We proceed by induction on the size of the support of $D$. In the base case if $D$ is supported on a single element, then $D' = D$ satisfies the desired properties. In the inductive case, suppose that the size of the support of $D$ is $n$. Let $v_{\max}$ be the largest value in the support of $D$ and define $p_{\max}$ to be $\Pr_{X \leftarrow D}[X = v_{\max}]$. Define $p'_{\max}$ to be $\Pr_{X' \leftarrow D'}[X' = v_{\max}]$, and choose $p_{\max}'$ such that $(1-p_{\max}')^k = (1-p_{\max})$. Then, $\Pr_{X'_1, \dots, X'_k \leftarrow D'}[\max\{X'_1, \ldots, X'_k\} = v_{\max}] = p_{\max}$.

Let $D_0$ be $D$ without $v$ (normalized accordingly). By the inductive hypothesis, there exists $D_0'$ such that the max of $k$ independent draws from $D_0'$ has the same distribution as one draw from $D_0$. We claim that the distribution
\[
D' = \begin{cases}
v_{\max} & \text{with probability }p'_{\max}\\
D_0' & \text{with probability }1-p'_{\max}
\end{cases}
\]
satisfies the desired property. As shown above, the probability assigned by $D$ to $v$ is the same as the probability that the max of $k$ draws of $D'$ is $v_{\max}$. Consider $w < v_{\max}$ in the support of $D$. Define $q$ to be $\Pr_{Y \leftarrow D_0}[Y = w]$. Then, $\Pr_{X \leftarrow D}[X = w] = (1-p'_{\max})q$. By the inductive hypothesis, the probability that the max of $k$ draws from $D_0'$ is $q$, so the probability that max of $k$ draws from $D'$ is also $(1-p'_{\max})q$ as desired.
\end{proof}

Clearly, we can also do this process in reverse. That is, for any distribution $D'$, there exists a distribution $D$ such that the max of $k$ samples drawn from $D'$ are distributed identically to a single sample from $D$. This allows us to easily switch between working with the max of multiple i.i.d. random variables and working with a single random variable.

\subsection{Proof of Theorem \ref{thm:batch}}

With these results in hand, we move on to prove Theorem \ref{thm:batch}, the equivalence between the batch prophet inequalities problem and the i.i.d. prophet inequalities problem. Recall that the batch size $k$ prophet inequalities problem is the setup where the gambler receives $n$ total i.i.d samples from a distribution $\mathcal{D}$ in batches of $k$ at a time. The gambler is able to accept any of the $k$ samples in the current batch or reject all $k$ forever and receive the next batch of $k$ samples. 

\batch*

\begin{proof}
First, note that $BATCH_{n/k} \ge \alpha_k$, since we can run the optimal algorithm for i.i.d. prophet inequalities problem on the largest sample in each batch. Therefore, we turn our attention to showing that $BATCH_{n/k} \le \alpha_k$.

Let $\D$ be a discrete distribution with finite support and $c$ be the optimal competitive ratio for the i.i.d. prophet inequalities problem with $n$ samples from $\D$. Suppose for the sake of contradiction that there exists some algorithm that achieves competitive ratio greater than $c$ on the batch size $n/k$ prophet inequalities problem. By Lemma 5, we can construct a distribution $\D'$ such that the max of $n/k$ samples drawn from $\D'$ is distributed according to $\D$. Let $\A$ be the optimal algorithm for the batched prophet inequalities problem and consider running $\A$ on samples drawn from $\D'$. By assumption, $\A$ gets competitive ratio strictly greater than $c$. Note that by the definition of the optimal algorithm, any sample which is not a maximum in its batch is ignored by $\A$. Since the maxima from each batch are distributed according to $\D$, we can achieve a competitive ratio strictly greater than $c$ in the standard setting by simulating the optimal batch algorithm on the samples from $\D$, treating each sample as a batch maximum. This creates a contradiction, so the optimal competitive ratio for $\D'$ in the batch setting is at most $c$. Let $\D_1, \D_2, \ldots$ be the sequence of distributions given in \cite{HK82} such that the competitive ratios of i.i.d. prophet inequalities problem with $k$ samples from a distribution from the sequence approaches $\alpha_k$. Letting $\D = \D_i$ and letting $i$ approach infinity completes the proof.

\end{proof}

Next, we state two corollaries which will be useful when we turn to window prophet inequalities.

\begin{restatable}{cor}{batcho(n)}
\label{cor:batcho(n)}
If $b = o(n)$, then $BATCH_b = \alpha$.
\end{restatable}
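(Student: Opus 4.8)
The plan is to derive the corollary directly from Theorem~\ref{thm:batch} by translating the batch \emph{size} into the number of batches. Running the batch size $b$ problem on $n$ samples produces $m := n/b$ batches, so setting $k = m$ (equivalently, batch size $n/k = b$) and applying Theorem~\ref{thm:batch} yields the exact identity $BATCH_b = \alpha_{n/b}$. This reduces a statement about batches to a statement about the ordinary $m$-sample i.i.d. competitive ratios $\alpha_m$, evaluated at $m = n/b$.

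The remaining step is a limit argument. The hypothesis $b = o(n)$ is precisely the assertion that the number of batches $m = n/b$ tends to infinity as $n \to \infty$, so it suffices to control $\alpha_m$ for large $m$. Here I would invoke the two facts recorded in the introduction: $\alpha_m \ge \alpha$ for every $m$ (the algorithm of \cite{correa_iid}), and $\lim_{m\to\infty}\alpha_m = \alpha$ (Kertz's evaluation of the Hill--Kertz bound \cite{HK82,kertz86}, shown tight by \cite{correa_iid}). Since $n/b \to \infty$, these give $BATCH_b = \alpha_{n/b} \to \alpha$, which is the claimed equality read asymptotically in $n$. I would take care to use only the convergence $\alpha_m \to \alpha$ together with the uniform lower bound, and \emph{not} the monotonicity of the sequence $(\alpha_m)$, since the latter is merely conjectured (as noted in Section~\ref{sec:batch}) and is not needed here.

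Two minor technical points remain, neither of which I expect to be a real obstacle. First, $n/b$ need not be an integer; I would handle this by observing that the number of full batches is $\lfloor n/b\rfloor$ and that a smaller final batch only helps the gambler, so $BATCH_b$ is sandwiched between $\alpha_{\lceil n/b\rceil}$ and $\alpha_{\lfloor n/b\rfloor}$, both of which tend to $\alpha$ as $n \to \infty$. Second, I would state explicitly that ``$BATCH_b = \alpha$'' is meant asymptotically in $n$, since for finite $n$ one only has $BATCH_b = \alpha_{n/b} \ge \alpha$, with equality in the limit. In short, essentially all of the content is already carried by Theorem~\ref{thm:batch}; the only care required is in passing to the limit correctly and in avoiding the unproven monotonicity assumption.
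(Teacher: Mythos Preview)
Your proposal is correct and follows exactly the paper's approach: the paper's entire proof is the single sentence that the corollary ``follows from Theorem~\ref{thm:batch} and the fact that $\lim \alpha_k = \alpha$ \cite{kertz86}.'' Your additional remarks on divisibility, the asymptotic reading of the equality, and avoiding the unproven monotonicity of $(\alpha_m)$ are careful elaborations the paper leaves implicit, but they do not change the argument.
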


This follows from Theorem \ref{thm:batch} and the fact that $\lim \alpha_k = \alpha$ \cite{kertz86}. The relationship between the batched and standard settings also allows us to explicitly find hard distributions for the batched prophet inequalities problem.

\begin{restatable}{cor}{batchreduc}
For any $n, k > 0$ with $k \mid n$ and any $\epsilon > 0$, applying the process of Lemma \ref{lemma:probs} to an appropriate extremal distribution from \cite{HK82} gives a distribution $\D$ such that the competitive ratio for the batch size $k$ prophet inequalities problem with $n$ samples from $\D$ is at most $\alpha_{n/k} + \epsilon$.
\end{restatable}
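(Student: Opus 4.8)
The plan is to reuse the construction that appears in the proof of Theorem \ref{thm:batch}, now read ``in the forward direction'' so as to exhibit the hard distribution explicitly. Since $k \mid n$, the batch size $k$ problem on $n$ samples has exactly $n/k$ batches, so by Theorem \ref{thm:batch} its optimal competitive ratio is $\alpha_{n/k}$; the task is therefore to name a single distribution that comes within $\epsilon$ of this bound, rather than merely asserting its existence.

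First I would invoke the extremal distributions of \cite{HK82}: for $m = n/k$ samples and the given $\epsilon$, there is a discrete, finite-support distribution $\hat{\D}$ such that the i.i.d. prophet inequalities problem on $m$ samples from $\hat{\D}$ has competitive ratio at most $\alpha_m + \epsilon = \alpha_{n/k} + \epsilon$. Because $\hat{\D}$ is discrete with finite support, Lemma \ref{lemma:probs} applies with its integer parameter taken to be the batch size $k$, yielding a distribution $\D$ for which the maximum of $k$ independent draws is distributed exactly as a single draw from $\hat{\D}$. This $\D$ is the candidate hard distribution asserted by the corollary.

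Next I would verify that running the batch size $k$ problem on $n$ samples from $\D$ is equivalent, for both players, to running the i.i.d. problem on $n/k$ samples from $\hat{\D}$. On the prophet's side this is immediate: the maximum of all $n$ draws from $\D$ is the maximum of the $n/k$ batch maxima, and each batch maximum is distributed as $\hat{\D}$, so the prophet's expected reward coincides in the two settings. On the gambler's side I would appeal to the fact, recorded in the preliminaries and used in the proof of Theorem \ref{thm:batch}, that the optimal batch algorithm only ever accepts the maximum of a batch (comparing it against the threshold $V_{m-i}(\hat{\D})$ and ignoring the remaining samples); hence the gambler's decision problem is precisely a stream of $n/k$ i.i.d. samples from $\hat{\D}$, and her optimal expected reward equals that of the i.i.d. problem on $\hat{\D}$ with $n/k$ samples. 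Dividing the two equalities shows that the competitive ratio in the batch setting equals the competitive ratio of the i.i.d. problem on $n/k$ samples from $\hat{\D}$, which is at most $\alpha_{n/k} + \epsilon$.

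The only genuinely delicate point is this gambler-side equivalence: one must rule out that the optimal batch strategy is secretly aided by observing the non-maximal elements of each batch. But this is exactly the content already established in the proof of Theorem \ref{thm:batch}, so the corollary requires no new argument beyond observing that the extremal distribution feeding that proof can be pulled back through Lemma \ref{lemma:probs} to an explicit $\D$. I therefore expect the write-up to be short, with the bulk of the work being a clean statement of the two-sided reward equivalence rather than any new estimate.
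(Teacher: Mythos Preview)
Your proposal is correct and matches the paper's intent: the paper states this corollary without proof, treating it as an immediate consequence of the proof of Theorem~\ref{thm:batch} together with Lemma~\ref{lemma:probs}. Your write-up simply makes explicit the construction already implicit there---pull back a Hill--Kertz extremal distribution through Lemma~\ref{lemma:probs} and invoke the gambler/prophet equivalence established in that proof---so there is nothing to add.
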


\section{Windowed Prophet Inequalities - Size $o(n)$}
In this section, we present a proof of Theorem \ref{thm:o(n)} through a reduction to batched prophet inequalities. This allows us to leverage our results for batched prophet inequalities and illustrates connections between the two variants of the model. The proof serves as a warm-up for Section~\ref{section:constantfrac}, where these connections will play a key role. For a more direct proof of Theorem~\ref{thm:o(n)}, refer to Appendix~\ref{section:windowo(n)padding}.

\windowon*

We first prove a lemma regarding the position of the reward selected by a stopping algorithm.

\begin{restatable}{lem}{lemmaUniform}
Let $\mathcal{A}$ be a stopping algorithm for the prophet inequalities problem such that $\mathcal{A}$ achieves competitive ratio $a$ as its input $n$ gets large. Let $b = o(n)$ and pick $\delta > 0$. Then there exists stopping algorithm $\mathcal{B}$ such that, as $n$ gets large, $\mathcal{B}$ achieves competitive ratio at least $a(1-\delta)$, and additionally the value selected by $\mathcal{B}$ is (a priori) uniformly random modulo $b$.
\end{restatable}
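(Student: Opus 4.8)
The plan is to build $\B$ from $\A$ by a random cyclic shift that averages the stopping position to uniform while discarding only $o(n)$ samples. Concretely, first I would assume without loss of generality that $\A$ always accepts some value (forcing it to take the last available sample if it has rejected everything else only improves its reward). Then I would pick an offset $r$ uniformly at random from $\{0,1,\dots,b-1\}$, independently of the input, have $\B$ automatically reject the first $r$ samples and the last $b-1-r$ samples, and run $\A$ on the middle block $X_{r+1},\dots,X_{r+m}$, where $m = n-b+1$. The point of trimming both ends is that $\A$ now always sees exactly $m$ i.i.d. samples, regardless of $r$.

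For the uniformity claim, let $\sigma \in \{1,\dots,m\}$ be the within-block index selected by $\A$, so that $\B$ stops at global position $\tau = r+\sigma$. Since each block $X_{r+1},\dots,X_{r+m}$ consists of exactly $m$ i.i.d. draws, the distribution of $\sigma \bmod b$ is the same for every offset $r$; call it $q_\ell = \Pr[\sigma \equiv \ell \pmod b]$. Conditioning on $r$ and using independence of $r$ from the samples, I would compute
\[
\Pr[\tau \equiv j \pmod b] \;=\; \frac1b \sum_{r=0}^{b-1} q_{(j-r)\bmod b} \;=\; \frac1b \sum_{\ell=0}^{b-1} q_\ell \;=\; \frac1b,
\]
since as $r$ ranges over all residues so does $(j-r)\bmod b$. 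This convolution-with-the-uniform argument gives exactly the a priori uniformity modulo $b$, and it is the reason the sample count must be held fixed at $m$: if the number of samples seen by $\A$ depended on $r$, the residue distribution $q_\ell$ would too, and the shifts would not average out cleanly.

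For the competitive ratio, the expected reward of $\B$ equals the expected reward of $\A$ on $m$ i.i.d. samples, which by hypothesis is at least $a_m\,E_m[X]$, where $a_m$ is $\A$'s competitive ratio at input size $m$ and $a_m \to a$. Measured against the true length-$n$ prophet, $\B$'s ratio is therefore $\tfrac{a_m E_m[X]}{E_n[X]} = a_m \cdot \tfrac{E_m[X]}{E_n[X]}$, so it remains to show $E_m[X]/E_n[X] \to 1$. Here I would use that the marginal increments $d_j := E_j[X]-E_{j-1}[X] = E\big[(X_j - \max\{X_1,\dots,X_{j-1}\})^+\big]$ are nonincreasing in $j$; hence the $b-1$ increments lost by trimming are the smallest, and their total is at most their count times the average increment, giving $E_n[X]-E_m[X] \le \tfrac{b}{n}E_n[X]$, i.e. $E_m[X]/E_n[X] \ge 1 - b/n$. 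Since $b = o(n)$ this tends to $1$, so $\B$'s competitive ratio tends to $a \ge a(1-\delta)$.

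The main obstacle is the ratio bound $E_m[X]/E_n[X]\to1$: it is what makes ``throwing away $o(n)$ samples is essentially free'' rigorous, and it relies on the monotonicity of the marginal increments together with the observation that the smallest $b-1$ of $n$ nonincreasing terms have average at most the overall mean. Everything else — the uniformity and the validity of $\B$ as a randomized stopping rule — then follows from the exchangeability of the i.i.d. samples.
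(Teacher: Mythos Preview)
Your proposal is correct and follows essentially the same construction as the paper: choose a uniformly random offset in $\{0,\dots,b-1\}$, discard $o(n)$ samples so that $\A$ always runs on a block of fixed length $m=n-o(n)$, and use the exchangeability of the i.i.d.\ samples so that the random shift convolves the stopping residue to uniform. Your treatment is in fact more careful than the paper's --- you spell out the convolution explicitly and justify $E_m[X]/E_n[X]\to 1$ via the monotonicity of the increments $d_j=E_j[X]-E_{j-1}[X]$, whereas the paper simply asserts $E_n[X]\le (n/m)\,E_m[X]$.
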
\label{lemma:uniform}

\begin{proof}
First, let $m = n - (n \bmod b) - b$. Notably, as $n$ gets large, $m$ also gets large, since $n \bmod b + b$ is $o(n)$. Then, define a stopping algorithm $\mathcal{C}(s)$ that ignores the first $s$ samples of the input, simulates $\mathcal{A}$ on the next $m$ samples, then ignores the remaining $n-s-m$ samples. Since $m$ grows with $n$, we have that $\mathcal{C}$ achieves a competitive ratio of $a$ on the $m$ samples it does not ignore. Now, for the entire list of $n$ samples, the gambler's expected reward does not change. However, the prophet's expected reward does change when considering the remaining $n-m$ samples. In particular, the prophet achieves at most $\frac{n}{m}$ times the original expected reward on just $m$ samples. Since $m = n - o(n)$, as $n$ gets large, $\frac{n}{m}$ goes to 1, so the prophet achieves the same reward in expectation. Thus, for any $\delta$, $\mathcal{C}$ achieves a competitive ratio of $a(1-\delta)$. Now, let $\mathcal{B}$ choose a uniformly random $s \in \{1, 2, \ldots, b\}$, then run $\mathcal{C}(s)$ with this value. Then $\mathcal{B}$ achieves competitive ratio $a(1-\delta)$, but also chooses a uniformly random element modulo $b$ as desired. 
\end{proof}

In essence, the above lemma relies on the fact that when $b = o(n)$, we can choose an offset $s \in \{1,2,\ldots, b\}$ at the beginning to ignore, and when $n$ gets large this offset does not affect the competitive ratio too much. Now that we have this lemma, we can proceed with our reduction to the batched setup.

\windowon*

\begin{proof}
First, we see that we can achieve a competitive ratio of at least $\alpha$ by ignoring all samples except the first in our window (i.e. only considering samples one at a time) and simulating the standard prophet inequalities algorithm. Thus, we turn our attention to showing that $\alpha$ is an upper bound as well.

Suppose for contradiction that there exists some stopping algorithm $\mathcal{A}$ for the window size $k$ setup that achieves competitive ratio $a = \alpha\left(\frac{1}{1-\epsilon}\right)^2 > \alpha$ for some $\epsilon > 0$. Now, consider the batch size $b = \left\lceil\frac{1}{\epsilon}\right\rceil \cdot k$ prophet inequalities setup. Notably, $b = o(n)$ since $k = o(n)$. By Corollary \ref{cor:batcho(n)}, we know that the optimal competitive ratio for the batch size $b$ prophet inequalities problem is $\alpha_{n/b}$, so as $n$ gets large, this competitive ratio goes to $\alpha$ \cite{kertz86}. We define a stopping rule $\mathcal{B}$ for the batch size $b$ prophet inequalities setup that has access to $\mathcal{A}$. As $\mathcal{B}$ receives each batch, have it feed samples into $\mathcal{A}$ one at a time until either $\mathcal{A}$ accepts a sample, after which $\mathcal{B}$ should immediately accept that same sample, or the current batch runs out, in which case $\mathcal{B}$ receives the next batch and continues. $\mathcal{B}$ only does not accept the same element as $\mathcal{A}$ when $\mathcal{B}$ gives $\mathcal{A}$ one of the first $k-1$ elements of a batch and $\mathcal{A}$ decides to accept an element from the previous batch that is still in the current window. As $n$ gets large, by Lemma \ref{lemma:uniform}, we know that without loss of generality the index of the element selected by $\mathcal{A}$ is uniformly distributed modulo $b$ and that $\mathcal{A}$ achieves competitive ratio of at least $a(1-\epsilon)$. Thus, the probability that $\mathcal{A}$ selects an element outside the current batch is at most $\frac{k-1}{b} < \epsilon$ by construction. Thus, with probability at most $\epsilon$, $\mathcal{B}$ gets reward zero, and with probability at least $1-\epsilon$ $\mathcal{B}$ gets the same reward as $\mathcal{A}$, which by construction gives competitive ratio $\alpha\left(\frac{1}{1-\epsilon}\right)^{2} \cdot(1-\epsilon)$. This is strictly greater than $\alpha$, which contradicts Corollary \ref{cor:batcho(n)}. Thus, there can be no such $\mathcal{A}$, and therefore $WIN_k = \alpha$ for $k = o(n)$.

\end{proof}

\section{Windowed Prophet Inequalities - Size $n/k$}\label{section:constantfrac}
In this section, we provide a proof of Theorem~\ref{thm:windowcf}, an asymptotically tight bound on the gambler's performance in windowed prophet inequalities when the window size is a constant fraction of the total number of samples. We show that for any $k$, $WIN_{n/k}$ is strictly larger than $\alpha_k$, but as $k$ grows, $WIN_{n/k}$ approaches $\alpha_k$. In Appendix \ref{section:windowcftight} we extend these results by proving a tighter upper bound and giving numerical approximations of its implications for different values of $k$.

\windowcf*

\subsection{Lower Bound}
We will first prove that for integer $k$, $WIN_{n/k} > \alpha_k$. By Theorem \ref{thm:batch}, it is equivalent to prove $WIN_{n/k} > BATCH_{n/k}$. In order to limit the classes of distributions that we must consider for this argument, we leverage a discretization technique defined in \cite{HK81} combined with a strengthened version of properties of this distribution proved in \cite{HK81} and \cite{HK82}. We begin with this discretization and its properties.

\begin{definition}{\cite{HK81}}
For random variable $Y$ and constants $0 \le a < b < \infty$, let $Y_a^b$ be the random variable such that $Y_a^b = Y$ if $Y \not \in [a, b]$, $=a$ with probability $(b-a)^{-1} \int_{Y \in [a, b]} (b-Y)$, and $=b$ otherwise $\left(\text{with probability }(b-a)^{-1} \int_{Y \in [a, b]} (Y-a)\right)$.
\end{definition}

Intuitively, $Y_a^b$ can be viewed as the random variable with the same expectation as $Y$ that has the largest possible variance in the range $[a, b]$, as formalized in the following lemma.

\begin{restatable}{lem}{lemma2.2extended}
Let $Y$ be a random variable, $0 \le a < b < \infty$, and $X$ be a random variable independent of both $Y$ and $Y_a^b$. Then 
\begin{enumerate}[i.]
\item $E[Y] = E[Y_a^b]$
\item $E[\max\{X, Y\}] \le E[\max\{X, Y_a^b\}]$
\item Let $p_a$ and $p_b$ be the probabilities that $Y_a^b$ equals $a$ and $b$ respectively. For all $\epsilon_1, \epsilon_2, \epsilon_3 > 0$ and all distributions satisfying $p_a, p_b > \epsilon_1$ and $Pr(X \in [a + \epsilon_2, b - \epsilon_2]) > \epsilon_3$, there exists $\delta > 0$ depending only on $\epsilon_1$, $\epsilon_2$, and $\epsilon_3$ such that $E[\max\{X, Y\}] < E[\max\{X, Y_a^b\}] - \delta$.
\end{enumerate}\label{lemma:2.2_extended}
\end{restatable}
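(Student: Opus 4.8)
The plan is to treat the three parts in order, with (i) and (ii) being short computations and (iii) carrying the real work.

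For (i), I would observe that the weights $\frac{b-y}{b-a}$ and $\frac{y-a}{b-a}$ placed on $a$ and $b$ are exactly the barycentric coordinates of $y$, so the two-point law that $Y_a^b$ substitutes for each $y\in[a,b]$ has mean $y$. Concretely, expanding $E[Y_a^b]$ and collecting the interval contribution gives $a\cdot\frac{1}{b-a}E[(b-Y)\mathbf{1}_{Y\in[a,b]}] + b\cdot\frac{1}{b-a}E[(Y-a)\mathbf{1}_{Y\in[a,b]}] = E[Y\mathbf{1}_{Y\in[a,b]}]$, and since $Y_a^b=Y$ off $[a,b]$ we conclude $E[Y_a^b]=E[Y]$. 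For (ii), I would condition on $X=x$ (using independence) and use that $y\mapsto\max\{x,y\}$ is convex. As $Y$ and $Y_a^b$ agree outside $[a,b]$, it suffices to compare the interval contributions, and convexity gives, for every $y\in[a,b]$, the chord bound $\max\{x,y\}\le \frac{b-y}{b-a}\max\{x,a\}+\frac{y-a}{b-a}\max\{x,b\}$. Taking expectations over $Y$ restricted to $[a,b]$ turns the right-hand side into precisely the interval contribution of $\max\{x,Y_a^b\}$, and integrating over $X$ yields (ii).

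For (iii), the idea is to quantify the slack in that chord bound. Writing $c_x(y)=\frac{b-y}{b-a}\max\{x,a\}+\frac{y-a}{b-a}\max\{x,b\}$ for the chord and $g_x(y)=c_x(y)-\max\{x,y\}\ge 0$, the manipulation from (ii) gives the exact identity $E[\max\{X,Y_a^b\}]-E[\max\{X,Y\}]=E_X\!\left[E_Y\!\left[g_X(Y)\,\mathbf{1}_{Y\in[a,b]}\right]\right]$. A direct calculation shows that for $x\in(a,b)$ the function $g_x$ is a nonnegative ``tent'': $g_x(y)=\frac{(y-a)(b-x)}{b-a}$ for $y\le x$ and $g_x(y)=\frac{(b-y)(x-a)}{b-a}$ for $y\ge x$, peaking at the kink $y=x$. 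In particular, whenever both $x$ and $y$ lie in $[a+\epsilon_2,b-\epsilon_2]$ we have $g_x(y)\ge \frac{\epsilon_2^2}{b-a}$. Restricting the double expectation to this product region and factoring via independence, the gap is at least $\frac{\epsilon_2^2}{b-a}\,\Pr(X\in[a+\epsilon_2,b-\epsilon_2])\,\Pr(Y\in[a+\epsilon_2,b-\epsilon_2])$, and the hypothesis on $X$ controls the middle factor by $\epsilon_3$.

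The hard part will be the last factor: I need a lower bound on the interior mass $\Pr(Y\in[a+\eta,b-\eta])$ in terms of $\epsilon_1$. This is the delicate step, because $p_a$ and $p_b$ are \emph{distance-weighted} endpoint masses, so probability sitting at or very near $a$ and $b$ inflates $p_a,p_b$ while contributing essentially nothing to the tent $g_x$; a purely two-point $Y$ on $\{a,b\}$ even makes the gap vanish. I would therefore aim to show that, in the discretized setting in which this lemma is invoked, the constraints $p_a,p_b>\epsilon_1$ force $Y$ to place probability bounded below (as a function of $\epsilon_1$) on some interior subinterval $[a+\eta,b-\eta]$ — choosing $\eta$ small relative to $\epsilon_1$ and arguing that if the interior mass were too small then the distance-weighting would drive one of $p_a,p_b$ below $\epsilon_1$. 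Establishing exactly this quantitative interior-mass bound is the crux; once it is in hand, combining it with the two displayed factors produces a $\delta>0$ depending only on $\epsilon_1,\epsilon_2,\epsilon_3$ (and the fixed interval $[a,b]$), which completes (iii).
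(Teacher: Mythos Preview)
Your parts (i) and (ii) are correct and coincide with the paper's approach, which simply cites Lemma~2.2 of \cite{HK81}; the barycentric/chord computation you give is exactly the mechanism behind both.

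For (iii) the two arguments diverge slightly. The paper conditions only on the event $W=\{X\in[a+\epsilon_2,b-\epsilon_2]\}$ and then asserts, ``by the convexity of $\psi$ and the fact that $p_a,p_b>0$,'' a uniform constant $c$ bounding the $Y$-integral slack for each such $x$, finally picking up two factors of $\Pr(W)\ge\epsilon_3$ to obtain $\delta=c\,\epsilon_3^2$. You instead compute the tent $g_x(y)$ explicitly and restrict \emph{both} coordinates to the interior, which is more transparent and yields a gap of at least $\frac{\epsilon_2^2}{b-a}\cdot\epsilon_3\cdot\Pr(Y\in[a+\epsilon_2,b-\epsilon_2])$.

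You have correctly located the genuine obstruction, and it is one the paper's own proof does not actually overcome: the two-point law placing mass $\tfrac12$ at each of $a$ and $b$ has $p_a=p_b=\tfrac12$ yet $Y=Y_a^b$, so the gap is identically zero. Hence no lower bound on $\Pr(Y\in[a+\eta,b-\eta])$, and no positive $c$ in the paper's displayed inequality, can follow from $p_a,p_b>\epsilon_1$ alone; your plan to squeeze interior mass for $Y$ out of $\epsilon_1$ cannot succeed, and you should abandon it. The fix is not an inequality but an additional hypothesis: in the only place this lemma is invoked (Lemma~\ref{lemma:2.4-extended}), $Y$ is an i.i.d.\ copy of the same $X$, so the assumption $\Pr(X\in[a+\epsilon_2,b-\epsilon_2])>\epsilon_3$ automatically gives interior mass for $Y$, and your tent bound then yields $\delta\ge\frac{\epsilon_2^2}{b-a}\,\epsilon_3^2$ directly. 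With that amendment your argument is complete and in fact cleaner than the paper's; as literally stated, part (iii) is slightly too strong for either proof to close.
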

\begin{proof}
The first two items are proven in Lemma 2.2 of \cite{HK81}, so it only remains to strengthen this proof slightly to prove \emph{iii}. Let $\psi_x(y) = \max\{x,  y\}$. If $x \in [a + \epsilon_2, b - \epsilon_2]$, then by the convexity of $\psi$ and the fact that $p_a, p_b > 0$,
\[\int_{Y \in [a, b]} \max\{x, y\} < (b-a)^{-1} \left( \max\{x, a\} \int_{Y \in [a, b]} (b-Y) + \max\{x, b\} \int_{Y \in [a, b]} (Y - a)\right) - c
\]
for some constant $c$. Thus if we let $W$ be the event that $X \in [a + \epsilon_2, b - \epsilon_2]$, then
\begin{align*}
    E\big[\max\{X, Y\} \mid W\big] &= \int_{X \in [a + \epsilon_2, b - \epsilon_2]} \left(\int_{Y \not \in [a, b]} \max\{X, Y\} + \int_{Y \in [a, b]} \max\{X, Y\}\right)\\
    &<  \int_{X \in [a + \epsilon_2, b - \epsilon_2]} \Bigg(\int_{Y \not \in [a, b]} \max\{X, Y\}\\
    &\phantom{=} \quad + (b-a)^{-1} \Big( \max\{X, a\} \int_{Y \in [a, b]} (b-Y)\\
    &\phantom{=} \quad \quad+ \max\{X, b\} \int_{Y \in [a, b]} (Y - a)\Big) - c\Bigg)\\
    &\le E[\max\{X, Y_a^b\} \mid W] - c \cdot \epsilon_3
\end{align*}
So,
\begin{align*}
    E[\max\{X, Y\}] &= E[\max\{X, Y\} \mid W] Pr(W) + E[\max\{X, Y\} \mid \overline{W}] Pr(\overline{W})\\
    &< (E[\max\{X, Y_a^b\} \mid W] - c \cdot \epsilon_3) Pr(W) + E[\max\{X, Y_a^b\} \mid \overline{W}] \Pr(\overline{W})\\
    &\le E[\max\{X, Y_a^b\}] - c \cdot \epsilon_3^2.
\end{align*}
\end{proof}
These properties can be used to prove a modified version of Lemma 2.4 from \cite{HK82}.
\begin{restatable}{lem}{lem2.4extended}
Let $\epsilon_1$, $\epsilon_2$, and $\epsilon_3$ be arbitrary positive constants. Let $X$ be a non-negative random variable such that
\begin{align*}
    V_1(X)^{-1}\int_{X \in [0, V_1(X)]} (V_1(X) - X) &> \epsilon_1 \\ 
    V_1(X)^{-1}\int_{X \in [0, V_1(X)]} X &> \epsilon_1
\end{align*}
and $Pr(X \in [\epsilon_2, V_1(X) - \epsilon_2]) > \epsilon_3$. Then for all $k$,
\begin{enumerate}[i]
    \item $V_k(X) = V_k(X_0^{V_1(X)})$
    \item There exists $\delta > 0$ depending only on $\epsilon_1$, $\epsilon_2$, $\epsilon_3$, and $k$ such that $E_k[X] < E_k[X_0^{V_1(X)}] - \delta$.
\end{enumerate}\label{lemma:2.4-extended}
\end{restatable}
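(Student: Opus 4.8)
The plan is to prove the two claims of Lemma~\ref{lemma:2.4-extended} by induction on $k$, reducing each to the corresponding item of Lemma~\ref{lemma:2.2_extended} applied with the interval $[a,b] = [0, V_1(X)]$. The two hypotheses on $X$ are precisely the statement that $p_a = \Pr(X_0^{V_1(X)} = 0)$ and $p_b = \Pr(X_0^{V_1(X)} = V_1(X))$ both exceed $\epsilon_1$, so that the preconditions of item~iii of Lemma~\ref{lemma:2.2_extended} are met (with the role of $a$ being $0$ and $b$ being $V_1(X)$). The key structural fact I would exploit is the recursion from the Preliminaries, $V_k(X) = E[\max\{X, V_{k-1}(X)\}]$, together with the observation that the threshold $V_{k-1}(X)$ plays the role of the independent random variable $X$ in Lemma~\ref{lemma:2.2_extended} (it is a constant, hence trivially independent of both $X$ and $X_0^{V_1(X)}$).

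First I would establish item~i, that $V_k(X) = V_k(X_0^{V_1(X)})$, by induction on $k$. The base case $k=1$ follows immediately from item~i of Lemma~\ref{lemma:2.2_extended}, since $V_1$ is just the expectation and clipping preserves the mean. For the inductive step, I would show that the entire sequence of backward-induction thresholds is unchanged under clipping: assuming $V_{k-1}(X) = V_{k-1}(X_0^{V_1(X)})$, the threshold used at the next stage is the same constant $t := V_{k-1}(X)$, and one checks that $\max\{t, X\}$ and $\max\{t, X_0^{V_1(X)}\}$ have the same expectation. The crucial point making this work is that the clipping interval is $[0, V_1(X)]$ and the relevant thresholds $V_{k-1}(X)$ lie at or below $V_1(X)$; since $\max\{t, \cdot\}$ is affine (in fact constant then identity) on the clipping interval whenever $t \le V_1(X)$, the mean-preserving spread leaves the expectation of $\max\{t, \cdot\}$ exactly unchanged, giving $V_k(X) = V_k(X_0^{V_1(X)})$.

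Next I would prove item~ii, the strict gap in the prophet's value $E_k[X] = E[\max\{X_1, \dots, X_k\}]$. Here item~iii of Lemma~\ref{lemma:2.2_extended} supplies a strict inequality with a quantitative slack $\delta$ depending only on $\epsilon_1, \epsilon_2, \epsilon_3$, and I would bootstrap this to the $k$-fold maximum. The natural device is to write $E_k[X] = E[\max\{X_1, M\}]$ where $M = \max\{X_2, \dots, X_k\}$, an independent random variable, and apply Lemma~\ref{lemma:2.2_extended} with $Y = X_1$ and this $M$ in the role of the external variable $X$. To invoke item~iii I need $\Pr(M \in [0 + \epsilon_2,\, V_1(X) - \epsilon_2]) > \epsilon_3'$ for some positive $\epsilon_3'$ depending on the data; this follows because $\Pr(X \in [\epsilon_2, V_1(X)-\epsilon_2]) > \epsilon_3$ already forces a single copy, hence the maximum $M$ of $k-1$ copies, to land in that subinterval with probability bounded below (one must rule out $M$ being pushed \emph{above} $V_1(X) - \epsilon_2$, which I would handle by either working at the single-coordinate level or by slightly shrinking the target window). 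Clipping only $X_1$ while leaving $X_2, \dots, X_k$ untouched yields a first strict improvement; clipping the remaining coordinates one at a time — each step a weak inequality from item~ii of Lemma~\ref{lemma:2.2_extended} — chains up to $E_k[X] < E_k[X_0^{V_1(X)}] - \delta$ with the original $\delta$ preserved.

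The main obstacle I anticipate is making item~ii fully rigorous: the quantitative slack $\delta$ in Lemma~\ref{lemma:2.2_extended}(iii) is stated for a \emph{fixed} external distribution satisfying the mass condition $\Pr(X \in [a+\epsilon_2, b-\epsilon_2]) > \epsilon_3$, and I must verify that the induced distribution of $M = \max\{X_2,\dots,X_k\}$ genuinely satisfies an analogous condition with constants depending only on $\epsilon_1, \epsilon_2, \epsilon_3, k$ — in particular that enough of $M$'s mass stays strictly \emph{inside} the clipping interval rather than being pushed to its upper endpoint or beyond as $k$ grows. Controlling this dependence on $k$, and confirming that the $\delta$ produced is uniform over the admissible class of distributions (not merely positive for each fixed distribution), is the delicate step; everything else is the routine backward-induction bookkeeping justified by the affineness of $\max\{t,\cdot\}$ on the clipping interval.
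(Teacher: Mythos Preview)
Your overall strategy matches the paper's: induction on $k$ for item~i, and for item~ii write $E_k[X]=E[\max\{X_1,M\}]$ with $M=\max\{X_2,\dots,X_k\}$, apply Lemma~\ref{lemma:2.2_extended}(iii) once to clip $X_1$, then use Lemma~\ref{lemma:2.2_extended}(ii) to clip the rest. Two points, one an actual error and one a simplification.

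\textbf{Item~i: the threshold inequality is backwards.} You assert that the thresholds $t=V_{k-1}(X)$ lie \emph{at or below} $V_1(X)$ and that $\max\{t,\cdot\}$ is therefore ``constant then identity'' on $[0,V_1(X)]$. In fact $V_k(X)$ is nondecreasing in $k$ (more samples cannot hurt the gambler), so $V_{k-1}(X)\ge V_1(X)$ for every $k\ge 2$. With your claimed direction, $\max\{t,\cdot\}$ would be piecewise linear and strictly convex on the clipping interval, and a mean-preserving spread would only give an inequality, not the equality you need. The correct observation is that because $t\ge V_1(X)=b$, the map $y\mapsto\max\{t,y\}$ is \emph{constant} on $[0,b]$, so redistributing mass within $[0,b]$ leaves $E[\max\{t,X\}]$ unchanged. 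With that correction your inductive step goes through exactly as the paper's does.

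\textbf{Item~ii: your obstacle dissolves with a one-line bound.} The paper handles the worry that $M=\max\{X_2,\dots,X_k\}$ might escape $[\epsilon_2,V_1(X)-\epsilon_2]$ by simply noting that if \emph{all} $k-1$ copies land in that interval then so does their maximum; hence
\[
\Pr\big(M\in[\epsilon_2,V_1(X)-\epsilon_2]\big)\ \ge\ \epsilon_3^{\,k-1}\ \ge\ \epsilon_3^{\,k},
\]
a lower bound depending only on $\epsilon_3$ and $k$. This is enough to invoke Lemma~\ref{lemma:2.2_extended}(iii) with $\epsilon_3^{k}$ in place of $\epsilon_3$, yielding a $\delta=\delta(\epsilon_1,\epsilon_2,\epsilon_3,k)$ that is uniform over the admissible class. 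No window-shrinking or coordinate-by-coordinate finesse is needed.
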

\begin{proof}
The first conclusion follows from induction on $k$ with base case $V_1(X) = V_1(X_0^{V_1(X)})$ and inductive case $V_k(X) = E[\max\{X, V_{k-1}(X)\}] = E[\max\{X_0^{V_1(X)}, V_{k-1}(X_0^{V_1(X)})] = V_k(X_0^{V_1(X)})$, both following directly from Lemma \ref{lemma:2.2_extended}.

For the second conclusion, let $X_1, \ldots, X_k$ be independent copies of $X$. Note that \\$Pr(\max\{X_1, \ldots, X_{k-1}\} \in [\epsilon_2, 1 - \epsilon_2]) \ge \epsilon_3^k$, so we can apply conclusions 2 and 3 of Lemma \ref{lemma:2.2_extended} and induction to get
\begin{align*}
E_k[X] &= E[\max\{X_1, \ldots, X_k\}]\\
&= E[\max\{X, \max\{X_1, \ldots, X_{k-1}\}\}]\\
&< E[\max \{X_0^{V_1(X)}, \max\{X_1, \ldots, X_{k-1}\}\}] - \delta\\
&\le E_k[X_0^{V_1}(X)] - \delta.
\end{align*}
\end{proof}

We are now ready to show that $WIN_{n/k}$ is bounded away from $BATCH_{n/k}$.

\begin{restatable}{lem}{lb}
Let $k \ge 0$ be an integer. Then $WIN_{n/k} > \alpha_k$.
\end{restatable}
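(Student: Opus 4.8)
The plan is to prove only the strict lower bound $WIN_{n/k} > \alpha_k$ here, since the matching upper bound $WIN_{n/k} \le \alpha_k + o_k(1)$ is the separate main technical result. The inequality $WIN_{n/k} \ge \alpha_k$ is immediate: writing $b := n/k$, a windowed gambler can ignore the sliding structure, partition the stream into $k$ consecutive blocks of length $b$, and run the optimal batched algorithm, which by Theorem~\ref{thm:batch} guarantees ratio $BATCH_{n/k} = \alpha_k$. All of the work is in exhibiting a uniform gain, and by Theorem~\ref{thm:batch} it suffices to show $WIN_{n/k} > BATCH_{n/k}$.

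First I would pass to a convenient parametrization. By Lemma~\ref{lemma:probs} and the remark following it, every input distribution $\D'$ for the windowed problem corresponds to a distribution $\D$ with $X \sim \D$ equal in law to the maximum of $b$ independent draws from $\D'$; moreover the prophet's reward on $\D'$ equals $E_k[X]$ and the batched gambler's reward equals $V_k(X)$, so the batched ratio is exactly $V_k(X)/E_k[X]$. The windowed gambler's reward is at least $V_k(X)$, and the goal is to show that, over the near-worst-case distributions, it exceeds $V_k(X)$ by a fixed multiple of $E_k[X]$, which pushes the ratio above $\alpha_k$.

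The key reduction is to restrict attention to extremal distributions via Lemma~\ref{lemma:2.4-extended}. For any $\D$ satisfying the nondegeneracy hypotheses, replacing $X$ by its discretization $X_0^{V_1(X)}$ leaves the gambler's value $V_k$ unchanged (item~i) while strictly increasing the prophet's value $E_k$ by a constant $\delta$ (item~ii); hence the batched ratio can only decrease under discretization, and the worst-case distributions may be taken concentrated at the endpoints $0$ and $V_1(X)$ within $[0,V_1(X)]$. On such an extremal $\D$ I would construct an explicit windowed strategy beating the optimal batched one: whereas the batched gambler may stop only on the $k$ block maxima $\max\{X_{(j-1)b+1},\dots,X_{jb}\}$, the windowed gambler can stop on the running maximum $\max\{X_{i-b+1},\dots,X_i\}$ of \emph{every} length-$b$ window, including those straddling two consecutive blocks. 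Because an extremal $\D$ places non-negligible interior mass on a distinguished large value, these extra straddling windows let the gambler capture a large value in a boundary region, with constant probability, in an event where the batched strategy is forced to move on, giving a gain bounded below by a positive constant times $E_k[X]$.

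The main obstacle is making this gain \emph{uniform}: I must lower bound the windowed improvement by a single constant $\delta' > 0$ that does not decay along the sequence of distributions approaching the batched optimum, since $WIN_{n/k} > \alpha_k$ amounts to $\inf_{\D}$ of the windowed ratio exceeding $\alpha_k$. This is precisely where the strict, quantitative form of Lemma~\ref{lemma:2.4-extended}(ii) is needed: the parameters $\epsilon_1,\epsilon_2,\epsilon_3$ that certify the discretization gap $\delta$ also force any batch-tight distribution to keep mass bounded away from $0$ in the interior $(\epsilon_2, V_1(X)-\epsilon_2)$, so the straddling-window argument fires with probability bounded away from $0$ along the whole near-worst-case sequence. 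Distributions violating the hypotheses are degenerate and must be dispatched directly, where the gambler is essentially unconstrained and the bound is easy; reconciling these cases and verifying that the extremal reduction is compatible with the windowed (not merely batched) reward is the delicate part of the argument.
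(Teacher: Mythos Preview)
Your high-level plan matches the paper's: reduce to the batch parametrization via Lemma~\ref{lemma:probs}, use the discretization Lemma~\ref{lemma:2.4-extended} to dispose of non-extremal distributions, and exhibit an explicit windowed strategy that strictly beats the batch optimum on the remaining ones. However, the way you invoke Lemma~\ref{lemma:2.4-extended} is inverted, and this is a real gap.

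Lemma~\ref{lemma:2.4-extended} says that if $X$ has mass in the interior $(\epsilon_2, V_1(X)-\epsilon_2)$ then $V_k(X)=V_k(X_0^{V_1(X)})$ while $E_k[X] < E_k[X_0^{V_1(X)}]-\delta$; hence the \emph{batched} ratio $V_k/E_k$ for such $X$ is already strictly above $\alpha_k$, and nothing more is needed. The consequence is that the distributions you must actually handle are those with \emph{no} interior mass, not those with non-negligible interior mass as you write. Your sentence ``the parameters $\epsilon_1,\epsilon_2,\epsilon_3$ \dots\ force any batch-tight distribution to keep mass bounded away from $0$ in the interior'' has the implication backwards: batch-tight distributions are precisely those for which $\Pr(X\in[\epsilon_3,1-\epsilon_3])$ is arbitrarily small. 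So the mechanism you propose for the windowed gain---``an extremal $\D$ places non-negligible interior mass on a distinguished large value, these extra straddling windows let the gambler capture a large value in a boundary region''---does not fire, because on the hard distributions there is nothing in that region to capture.

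What the paper does for the residual case (mass concentrated at $0$, near $V_1(X)$, and above $V_1(X)$) is to specify a concrete algorithm: run the optimal batch rule until it accepts some $X^*$, then keep sliding the window until $X^*$ is about to fall out, and take the current window maximum. The gain comes from the event that two consecutive batch maxima both exceed the current threshold and the second one is larger and lands in the same window; this gives an explicit lower bound $\Delta$ (their Equation~\eqref{eqn:performance_diff}). When $\Delta$ is not uniformly positive, a further case analysis (showing $p_0,p_1$ bounded away from zero, then tweaking the penultimate threshold down to $1-\epsilon_3$) produces a modified strategy whose gain is bounded below by an expression involving $p_0$ and $p_1'$. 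Your proposal needs both the specific ``wait-then-take-window-max'' algorithm and this case analysis to close the argument; the vague ``stop on the running maximum of every length-$b$ window'' does not by itself give a uniform positive gain on the discretized distributions.
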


\begin{proof}
Consider the following algorithm $\A$ for the window size $n/k$ prophet inequalities problem:

\begin{itemize}
    \item Simulate the optimal batch size $n/k$ algorithm on the input until the algorithm accepts some element, call it $X^*$.
    \item Continue receiving more elements until $X^*$ is the last element in the window.
    \item Accept the largest element in the current window.
\end{itemize}

Let $D$ be an arbitrary distribution, scaled to have mean 1, and $D'$ be the distribution such that $k$ samples from $D$ are distributed identically to a single sample from $D'$. It will suffice to show that $\A$ run on $n$ samples from $D$ achieves strictly greater expected payoff than the optimal batch size $n/k$ algorithm run on $n$ samples from $D$, and that this difference is bounded above zero by some constant independent of $D$.

Notice that the optimal algorithm accepts the largest sample in the current batch precisely when it is greater than some threshold (namely the expected payoff for the remaining rounds). Furthermore, these thresholds depend only on the distribution and the time step. For a given distribution $D$, let $q_i$ be the probability that the largest sample from the $i$th batch is larger than the the $i$th threshold, or equivalently the probability that a sample from $D'$ is greater than $i$th threshold. Then the probability that the maximums of two consecutive batches are both bigger than the $i$th threshold is $q_i^2$. Given that this occurs, the probability that both maximums fall in the same window and the probability that the maximum of the second batch is larger than the maximum of the first batch are both $1/2$. Thus if we let $X_1, X_2, \ldots \sim D'$ independent and $F$ is the cumulative density function for $D'$, then the constructed window algorithm outperforms the optimal batch algorithm by at least

\begin{equation}
    \sum_{i=1}^{k-1} \frac{1}{4} q_i^2E[X_2 - X_1 \mid F^{-1}(1-q_i) < X_2 \le X_1] \prod_{j<i} (1-q_j),\label{eqn:performance_diff}
\end{equation}
which we will call $\Delta$. If $\Delta$ is larger than some constant independent of $D$, then we are done, so assume that for all $\epsilon_1 > 0$, $\Delta < \epsilon_1$ for some choice of $D$. We now use Equation \ref{eqn:performance_diff} to restrict the set of distributions that we must consider. Notice that each component of Equation \ref{eqn:performance_diff} can be interpreted either as a term describing the optimal batch algorithm run on the batch size $n/k$ Prophet Inequalities problem with distribution $D$, or as a term describing the optimal algorithm run on the standard i.i.d. Prophet Inequalities problem with $k$ samples from distribution $D'$. For the analysis that follows, it will be more convenient to use the latter interpretation.

Note that $q_1 < q_2 < \ldots < q_k$ and suppose temporarily that $k > 2$ and $\prod_{j < k-1} (1-q_j)$ is arbitrarily small. Then the probability that the algorithm will accept the third to last sample if it sees it is arbitrarily high. Let $X \sim D'$. The optimal algorithm accepts the third to last element precisely when it is greater than $V_2(X)$, so in this case $Pr(X \ge V_2(X))$ is arbitrarily high. Thus $V_2(X)$ is arbitrarily close to the mean of $D'$ (i.e. 1), so $D'$ is approximately a constant distribution, and the optimal algorithm can achieve a competitive ratio better than $\alpha_k$ for $D'$. Therefore, we assume that $\prod_{j < k-1} (1-q_j)$ is bounded away from zero.

Consider the random variable $X_0^1$, and let $p_0$ and $p_1$ be the probabilities placed on 0 and 1 respectively. We may assume that $p_0$ is bounded above zero, since otherwise either $X_0^1$ takes a single value, or its mean is not 1, a contradiction in both cases. Similarly, if $p_1$ is arbitrarily small, then the optimal algorithm will never accept an element unless it is greater than the mean of $D'$ (since we may assume without loss of generality that the optimal algorithm does not accept elements that are arbitrarily small). Thus the expected difference between the prophet's payoff and the gambler's payoff is at most

\[\sum_{i=1}^{k}
q_i E\big[\max\{X_2, X_3, \ldots\} - X \mid F^{-1}(1-q_i) < X_1 \le \max\{X_2, X_3, \ldots\}\big] \prod_{j <i} (1-q_j),\]

which would be arbitrarily small if $\Delta$ were arbitrarily small. Thus we may assume that for some $\epsilon_2$ independent of $D'$, $p_0, p_1 > \epsilon_2$. Let $\epsilon_3 > 0$. By Lemma \ref{lemma:2.4-extended}, if $Pr(X \in [\epsilon_3, 1-\epsilon_3])$ is bounded above zero, then the competitive ratio for $D'$ is strictly greater than $\alpha_k$, so we assume that $Pr(X \in [\epsilon_3, 1-\epsilon_3])$ is arbitrarily small.

Let $\B$ be the optimal batch algorithm modified such that the threshold for the second to last timestep is $1-\epsilon_3$ instead of 1. We now consider the performance of $\A'$, which we define to be $\A$ modified to use $\B$ instead of the batch algorithm subroutine. Note that $\Delta$ is now the difference between the payoff of $\A'$ and the payoff of $\B$. The payoff achieved by $\B$ is at least the payoff of the optimal batch algorithm minus $\epsilon_3 \cdot p_1$, which can be made arbitrarily close to the performance of the optimal batch algorithm by choosing $\epsilon_3$ small. Let $p_1' = Pr(X \in [1 - \epsilon_3, 1])$, and note that $p_1'$ is bounded above zero. Then

\[\Delta \ge (q_{k-1} + p_1') E[X_2 - X_1 \mid 1-\epsilon_3 < X_1 \le X_2] \prod_{j \le k-1} (1-q).\] 

In particular, it will suffice to show that the expectation is bounded above zero. This expectation is at least
\begin{equation}
E\big[X_2 - X_1 \mid X_2 > 1, X_1 \in [1-\epsilon_3, 1]\big] \cdot Pr( X_2 > 1, X_1 \in [1-\epsilon_3, 1])\label{eqn:expectation}
\end{equation}
By independence of $X_1$ and $X_2$, this is
\[\Big(E[X_2 \mid X_2 > 1]\cdot Pr(X_2 > 1) - E\big[X_1  \mid X_1 \in [1-\epsilon_3, 1]\big] \cdot Pr(X_2 > 1)\Big)  Pr(X_1 \in [1-\epsilon_3, 1]).\]
Recall that $E[X_0^1] = E[X] = 1$ and $E[X_0^1 \mid X_0^1 > 1] = E[X \mid X > 1]$. Thus
\[
E[X \mid X > 1] \frac{Pr(X > 1)}{Pr(X>1) + p_0} = 1,
\]
so $E[X \mid X > 1] Pr(X > 1) = Pr(X > 1) + p_0$. We also have $E[X \mid 1-\epsilon \le X \le 1] Pr(X > 1) \le Pr(X > 1)$, so their difference is at least $p_0$. Since $p_1$ is bounded above zero, so is $Pr(X \in [1-\epsilon, 1])$, so Equation \ref{eqn:expectation} is bounded above zero as desired. Therefore, $\A'$ achieves a better competitive ratio than $\alpha_k$, completing the proof of the lower bound on $WIN_{n/k}$.
\end{proof}
\subsection{Upper Bound}
Now, we show the second half of Theorem~\ref{thm:windowcf}, an upper bound on $WIN_{n/k}$.

\begin{restatable}{lem}{ub}\label{lemma:clean_ub}
Let $k \ge 0$ be an integer. Then $WIN_{n/k} \le \alpha_k + o_k(1)$.
\end{restatable}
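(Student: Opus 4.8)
The plan is to reduce the window problem to the batch problem, paralleling the proof of Theorem~\ref{thm:o(n)} but with a batch size that is a growing multiple of the window rather than $o(n)$. Concretely, I would show that any window algorithm for window size $w = n/k$ can be converted into a batch algorithm whose batch size $b$ is $\ell$ times the window, losing only a factor $1 + o_k(1)$ in the competitive ratio. Since such a batch algorithm can do no better than $BATCH_b = \alpha_{n/b}$ by Theorem~\ref{thm:batch}, this pins $WIN_{n/k}$ between $\alpha_k$ (the already-proved lower bound) and $\alpha_{n/b}(1+o_k(1))$, and the whole argument hinges on choosing $\ell$ so that both error terms vanish as $k \to \infty$.

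The key quantitative tradeoff is the following. Take an integer $\ell$ dividing $k$ with $\ell \to \infty$ and $\ell/k \to 0$ as $k \to \infty$ (for instance $\ell \approx \sqrt{k}$, rounded to a divisor), and set $b = \ell w = \ell n / k$, so that $n/b = k/\ell$ and $BATCH_b = \alpha_{k/\ell}$. A window of size $w$ reaches back only $w = b/\ell$ positions, so at any time it straddles the boundary between two consecutive size-$b$ batches only during the first $w-1$ positions of a batch. Given a window algorithm $\A$ with competitive ratio $c$, define a batch algorithm $\mathcal{B}$ that feeds the elements of each batch to $\A$ one at a time and accepts exactly the element $\A$ accepts, except that when $\A$ reaches back into the previous batch, $\mathcal{B}$ is forced to forfeit (it has already rejected that batch). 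Thus $\mathcal{B}$ earns exactly $\A$'s reward except on the ``straddling'' event.

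To bound the loss on the straddling event I would invoke a mild variant of Lemma~\ref{lemma:uniform}: by choosing a uniformly random offset and ignoring the first $s \in \{1,\dots,b\}$ samples, we may assume the index selected by $\A$ is uniform modulo $b$, so the straddling probability is at most $(w-1)/b < 1/\ell = o_k(1)$. The caveat is that Lemma~\ref{lemma:uniform} as stated requires $b = o(n)$, whereas here $b$ is a constant fraction $\ell/k$ of $n$; however, ignoring an offset of size at most $b$ costs the prophet only a factor $n/(n-b) = 1/(1-\ell/k) = 1 + o_k(1)$, so the same construction still makes the selected index uniform modulo $b$ at a cost that vanishes as $k \to \infty$. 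Combining the two sources of loss, $\mathcal{B}$ achieves competitive ratio at least $c\,(1 - o_k(1))$, hence $c\,(1-o_k(1)) \le \alpha_{k/\ell}$. Finally, since $\ell \to \infty$ and $k/\ell \to \infty$, Kertz gives $\alpha_{k/\ell} \to \alpha$ and $\alpha_k \to \alpha$, so $\alpha_{k/\ell} = \alpha_k + o_k(1)$; rearranging yields $c \le \alpha_k + o_k(1)$, and taking the supremum over $\A$ gives the claim.

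The step I expect to be the main obstacle is making rigorous the statement that $\mathcal{B}$ loses only a $(1 - o_k(1))$ factor relative to $\A$. The straddling \emph{probability} is small, but a priori the \emph{value} forfeited on that event could be disproportionately large, so one must argue that the offset randomization approximately decorrelates the selected value from its position modulo $b$ (equivalently, bound the forfeited reward by $o_k(1)$ times the prophet's reward rather than merely by the straddling probability). The divisibility bookkeeping ($\ell \mid k$ and $b \mid n$, handled by restricting to values of $n$ divisible by $b$) is routine by comparison.
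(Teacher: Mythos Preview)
Your plan is sound and takes a genuinely different route from the paper. The paper does not use offset randomization for this upper bound at all: it fixes $l=\sqrt{k}$, feeds the Hill--Kertz extremal distribution for $l$ samples into the batch-$n/l$ problem, and exploits the explicitly known probability that a sample equals zero in those distributions. Conditioning on the event $Y$ that the last $w$ positions of \emph{every} batch are all zero (so a window algorithm, which without loss of generality never accepts a zero, cannot straddle a batch boundary), it bounds the leftover term $E[X_\A\mid\overline Y]\Pr[\overline Y]/E_l[X]$ using the formula $p=(\gamma/(l-1))^{1/l}$ for the zero mass together with an order-invariance trick for the prophet's reward.

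Regarding the obstacle you flag: the decorrelation is essentially free, and this is where your proposal undersells itself. Because the samples are i.i.d., the joint law of the reward $R$, the relative accepted index $J$, and the relative stopping time $I$ of $\A$ is identical for every offset $s$; averaging the forfeited reward over $s\in\{1,\dots,b\}$ therefore yields exactly $E[R\cdot(I-J)/b]\le (w-1)E[R]/b$, which is the desired bound on forfeited \emph{value}, not merely on the straddling probability. The remaining bookkeeping (pad the batch instance by $b$ extra samples so $\A$ still runs on $n$ inputs, and control the prophet's gain from those extras by $1+O(\ell/k)$ via concavity of $j\mapsto E_j[X]$) is routine. Your argument is distribution-agnostic and arguably cleaner; the paper's argument trades that generality for explicit constants, which it then sharpens in Appendix~\ref{section:windowcftight}.
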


\begin{proof}
We prove a cleaner but looser bound that shares several techniques with the tighter bound, which we give in Appendix~\ref{section:windowcftight}. Fix $k$ and let $l < k$. Let $\D_1, \D_2, \ldots$ be the sequence of distributions from \cite{HK82} such that the competitive ratio for the i.i.d. Prophet Inequalities problem on $l$ samples from a distribution in this sequence approaches $\alpha_l$. Let $\epsilon$ be such that the competitive ratio for $\D_i$ is at most $\alpha_l + \epsilon$, and consider the batch size $n/l$ Prophet Inequalities problem on $\D_i$. We will argue that it is possible to use an algorithm for the window $w = n/k$ prophet inequalities problem to approximately solve the batch size $m = n/l$ prophet inequalities problem for $\D_i$, which has competitive ratio approaching $\alpha_l$. Since the value for $n$ is adversarially chosen, we may assume that $l$ and $k$ both divide $n$.

Let $\A$ be an arbitrary stopping rule for the windowed setting, and construct stopping rule $\B$ for the batch setting as follows. Given an input, simulate $\A$ until some element $X^*$ is selected. If it is possible for $\B$ to accept $X^*$, do so. Otherwise accept nothing. Clearly, when $\B$ accepts an element, it receives the same payoff as $\A$. The only time that $\B$ is unable to accept the same element as $\A$ is when $\A$ accepts one of the last $w$ elements of a batch.

We assume without loss of generality that the algorithm does not accept any value that is zero. Let $p$ be the probability that a given batch of inputs is all zeros and $q$ be the probability that for some window of $w$ samples, all of these samples are zero. Then $q^{m/w} = p$ so $q = p^{w/m}$. The probability that the last $w$ elements in every single batch are zero is $p^{wl/m}$. Let $Y$ be the event that the last $w$ elements of every batch are zero, $X \sim \D_i$, $X_\A$ be the element chosen by $\A$, and $X_\B$ be the element chosen by $\B$. Then
\begin{align*}
E[X_\A] &= E[X_\A \mid Y] Pr[Y] + E[X_\A \mid \overline{Y}] Pr[\overline{Y}]\\
        &= E[X_\A \mid Y] Pr[Y] + E[X_\A \mid \overline{Y}](1 - p^{wl/m})
\end{align*}
By construction, we know that $\B$ gets reward at least $E[X_\A \mid Y]Pr[Y]$, so after translating this into a competitive ratio, we have
\[WIN_{n/k} \le \alpha_l + \epsilon + \frac{E[X_\A \mid \overline{Y}]}{E_l[X]}(1 - p^{wl/m}).\]
From \cite{HK82}, we know that $p = (\frac{\gamma}{l-1})^{1/l} < (\frac{0.35}{l-1})^{1/l}$, so substituting this in, we get
\begin{equation}
WIN_{n/k} \le \alpha_l + \epsilon + \frac{E[X_\A \mid \overline{Y}]}{E_l[X]}\left(1 - \left(\frac{0.35}{l-1}\right)^{l/k}\right).\label{eqn:cr_inequality}
\end{equation}

Next we bound $E[X_\A \mid \overline{Y}] / E_l[X]$. The gambler never does better than the prophet, so $E[X_\A \mid \overline{Y}] \le E_l[X \mid \overline{Y}]$. Let $Z$ be the event that all elements in the entire input are zero. Since the expectation of the maximum of a set of samples (i.e. the prophet's reward) is order invariant, we know $E_l[X \mid \overline{Y}] = E_l[X \mid \overline{Z}]$. By the definition of $Z$,
\begin{align*}
E_l[X] &= 0 \cdot Pr(Z) + E_l[X \mid \overline{Z}] \cdot Pr(\overline{Z}) \\
&= \frac{E_l[X]}{Pr(\overline{Z})}
\end{align*}
Additionally, \[Pr(\overline{Z}) = 1 - p^l = 1 - \frac{\gamma}{l-1} > 1 - \frac{0.35}{l-1} = \frac{l - 1.35}{l-1}.\]
Substituting this into Equation \ref{eqn:cr_inequality}, letting $i$ approach infinity, and setting $l = \sqrt{k}$ gives
\[WIN_{n/k} \le \alpha_{\sqrt{k}} + \frac{\sqrt{k} - 1}{\sqrt{k}-1.35}\left(1 - \left(\frac{0.35}{\sqrt{k}-1}\right)^{1/\sqrt{k}}\right).\]
Finally, since $\alpha_k - \alpha_{\sqrt{k}} = o_k(1)$, it follows that $WIN_{n/k} \le \alpha_k + o_k(1)$.
\end{proof}

\begin{figure}[htbp]
\includegraphics[]{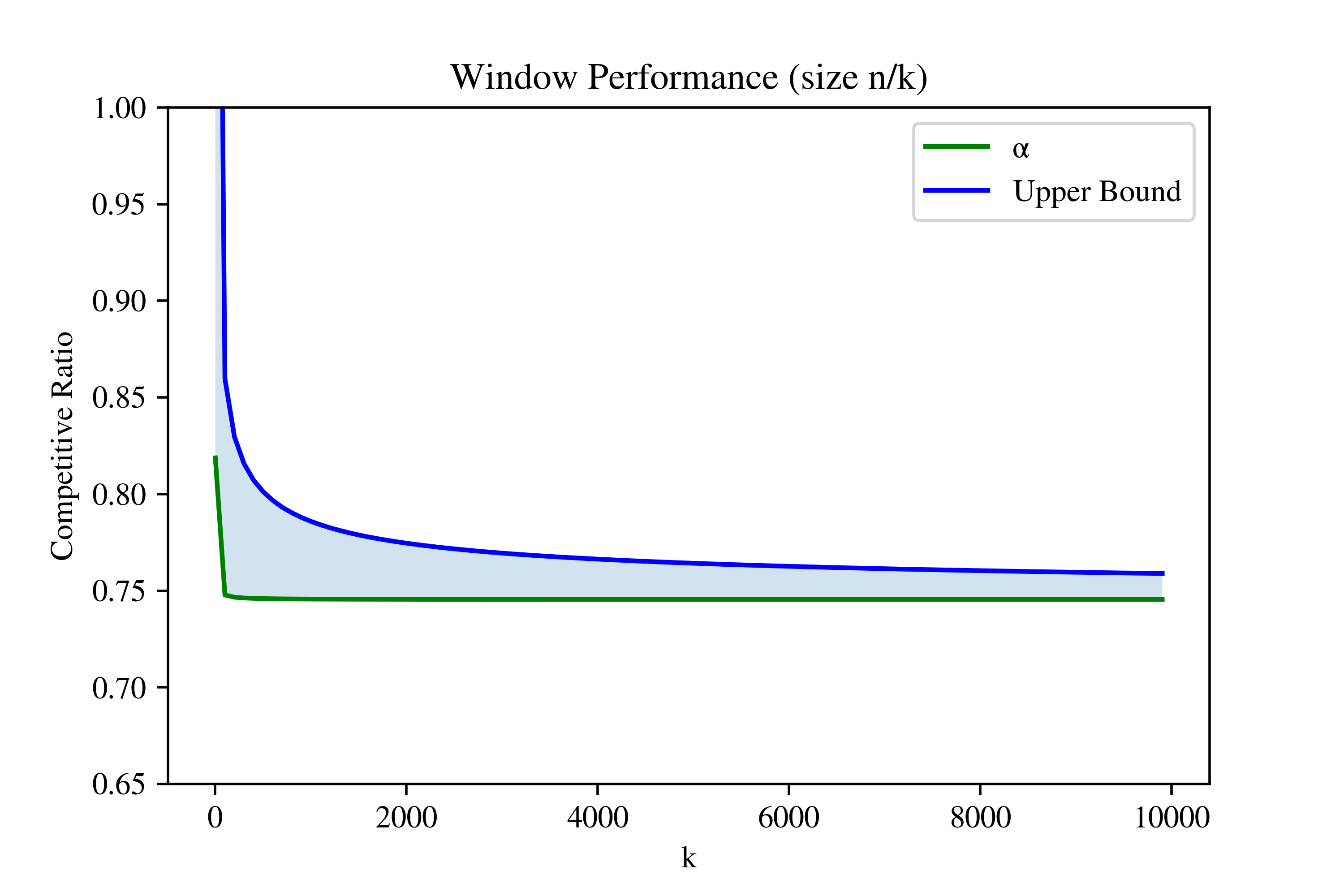}
\caption{The upper and lower bounds for the competitive ratio of the window size $n/k$ prophet inequalities problem.}\label{fig:graph}
\end{figure}

This bound can be tightened by restricting the failure case and optimizing over $l$. The specifics of this bound rely heavily on details about extremal distributions from \cite{HK82} and can be found in Appendix \ref{section:windowcftight}. Figure \ref{fig:graph} shows a comparison of the lower bound and tight upper bound. For example, $0.74785 < WIN_{n/100} \le 0.86095$, $0.74568 < WIN_{n/1000} \le 0.78592$, and $0.74546 < WIN_{n/10,000} \le 0.75885$.

\section*{Acknowledgments}
The authors would like to thank Omer Reingold for his guidance and technical insights, including the idea behind the proof in Appendix~\ref{section:windowo(n)padding}, and Aviad Rubenstein and Matt Weinberg for their helpful comments on earlier drafts of the paper.

\newpage
\printbibliography

@inproceedings{uncertain_supply,
author = {Alijani, Reza and Banerjee, Siddhartha and Gollapudi, Sreenivas and Munagala, Kamesh and Wang, Kangning},
title = {Predict and Match: Prophet Inequalities with Uncertain Supply},
year = {2020},
isbn = {9781450379854},
publisher = {Association for Computing Machinery},
address = {New York, NY, USA},
url = {https://doi.org/10.1145/3393691.3394212},
doi = {10.1145/3393691.3394212},
booktitle = {Abstracts of the 2020 SIGMETRICS/Performance Joint International Conference on Measurement and Modeling of Computer Systems},
pages = {13–14},
numpages = {2},
keywords = {stochastic horizon, prophet inequalities, social welfare, online pricing scheme},
location = {Boston, MA, USA},
series = {SIGMETRICS ’20}
}

@article{secretary_patience,
  title={How to Hire Secretaries with Stochastic Departures},
  author={Kesselheim, Thomas and Psomas, Alexandros and Vardi, Shai},
  journal={arXiv preprint arXiv:1909.08660},
  year={2019}
}

@article{secretary_window,
  title={A Secretary Problem with a Sliding Window for Recalling Applicants},
  author={Ho, Shan-Yuan and Krishnan, Abijith},
  journal={arXiv preprint arXiv:1508.07931},
  year={2015}
}

@article{KS78,
  title={On semiamarts, amarts, and processes with finite value},
  author={Krengel, Ulrich and Sucheston, Louis},
  journal={Probability on Banach spaces},
  volume={4},
  pages={197--266},
  year={1978},
  publisher={Marcel Dekker New York}
}

@article{Sam84,
  title={Comparison of threshold stop rules and maximum for independent nonnegative random variables},
  author={Samuel-Cahn, Ester and others},
  journal={the Annals of Probability},
  volume={12},
  number={4},
  pages={1213--1216},
  year={1984},
  publisher={Institute of Mathematical Statistics}
}

@article{Brown72,
  title={Great expectations: The theory of optimal stopping},
  author={Brown, BM},
  journal={Journal of the Royal Statistical Society: Series A (General)},
  volume={135},
  number={4},
  pages={610--610},
  year={1972},
  publisher={Wiley Online Library}
}

@article{HK82,
 ISSN = {00911798},
 URL = {http://www.jstor.org/stable/2243434},
 author = {T. P. Hill and Robert P. Kertz},
 journal = {The Annals of Probability},
 number = {2},
 pages = {336--345},
 publisher = {Institute of Mathematical Statistics},
 title = {Comparisons of Stop Rule and Supremum Expectations of I.I.D. Random Variables},
 volume = {10},
 year = {1982}
}

@article{kertz86,
title = {Stop rule and supremum expectations of i.i.d. random variables: a complete comparison by conjugate duality},
author = {Robert Kertz},
journal = {Journal of Multivariate Analysis},
number = {19},
pages = {88--112},
year = {1986}
}

@article{HK81,
  title={Additive comparisons of stop rule and supremum expectations of uniformly bounded independent random variables},
  author={Hill, Theodore P and Kertz, Robert P},
  journal={Proceedings of the American Mathematical Society},
  volume={83},
  number={3},
  pages={582--585},
  year={1981}
}

@article{lookahead,
  title={Prophet regions for look-ahead stopping rules for bounded random variables},
  author={Boshuizen, Frans},
  journal={Stochastic Analysis and Applications},
  volume={7},
  number={3},
  pages={261--271},
  year={1989},
  publisher={Taylor \& Francis}
}

@article{RWW,
  author={Rubinstein, Aviad and Wang, Jack and Weinberg, Matthew},
  title={Optimal Single-Choice Prophet Inequalities from Samples},
  journal={ITCS},
  year=2020
}

@inproceedings{GW,
 author = {Gravin, Nikolai and Wang, Hongao},
 title = {Prophet Inequality for Bipartite Matching: Merits of Being Simple and Non Adaptive},
 year = {2019},
 isbn = {9781450367929},
 publisher = {Association for Computing Machinery},
 address = {New York, NY, USA},
 url = {https://doi.org/10.1145/3328526.3329604},
 doi = {10.1145/3328526.3329604},
 booktitle = {Proceedings of the 2019 ACM Conference on Economics and Computation},
 pages = {93–109},
 numpages = {17},
 keywords = {prophet inequality, online matching, non-adaptive thresholds, edge arrivals},
 location = {Phoenix, AZ, USA},
 series = {EC ’19}
}

@inproceedings{random_ordering,
  title={Random order contention resolution schemes},
  author={Adamczyk, Marek and W{\l}odarczyk, Micha{\l}},
  booktitle={2018 IEEE 59th Annual Symposium on Foundations of Computer Science (FOCS)},
  pages={790--801},
  year={2018},
  organization={IEEE}
}

@inproceedings{optimal_ordering,
 author = {Shipra Agrawal and Jay Sethuraman and Xingyu Zhang},
 title = {On optimal ordering in the optimal stopping problem},
 year = {2020},
 publisher = {Association for Computing Machinery},
 address = {New York, NY, USA},
 booktitle = {Proceedings of the 2020 ACM Conference on Economics and Computation},
 series = {EC ’20}
}

@inproceedings{multiple_items,
  title={Automated mechanism design and prophet inequalities},
  author={Hajiaghayi, M and Kleinberg, Robert and Sandholm, Tuomas W},
  booktitle={Proceedings of the 22nd Conference on Artificial Intelligence},
  pages={58--65}
}

@inproceedings{KW12,
  title={Matroid prophet inequalities},
  author={Kleinberg, Robert and Weinberg, Matthew},
  booktitle={Proceedings of the forty-fourth annual ACM symposium on Theory of computing},
  pages={123--136},
  year={2012}
}

@article{AKW,
  author    = {Pablo Daniel Azar and
               Robert Kleinberg and
               S. Matthew Weinberg},
  title     = {Prophet Inequalities with Limited Information},
  journal   = {CoRR},
  volume    = {abs/1307.3736},
  year      = {2013},
  url       = {http://arxiv.org/abs/1307.3736},
  archivePrefix = {arXiv},
  eprint    = {1307.3736},
  timestamp = {Mon, 13 Aug 2018 16:48:20 +0200},
  biburl    = {https://dblp.org/rec/bib/journals/corr/AzarKW13},
  bibsource = {dblp computer science bibliography, https://dblp.org}
}

@inproceedings{abolhassani_1-1/e,
  title={Beating 1-1/e for ordered prophets},
  author={Abolhassani, Melika and Ehsani, Soheil and Esfandiari, Hossein and Hajiaghayi, MohammadTaghi and Kleinberg, Robert and Lucier, Brendan},
  booktitle={Proceedings of the 49th Annual ACM SIGACT Symposium on Theory of Computing},
  pages={61--71},
  year={2017}
}

@article{correa_iid,
  title={Posted Price Mechanisms for a Random Stream of Customers},
  author={Jos{\'e} R. Correa and Patricio Foncea and Ruben Hoeksma and Tim Oosterwijk and Tjark Vredeveld},
  journal={Proceedings of the 2017 ACM Conference on Economics and Computation},
  year={2017}
}
\newpage

\begin{appendix}
\section{Alternate Proof of Theorem~\ref{thm:o(n)}}
\label{section:windowo(n)padding}
One way to analyze the $o(n)$ case is to ``pad'' the input distribution such that with high probability no non-zero samples appear in the same window. If a window stopping rule does not see more than one ``interesting'' sample at a time, it can not do better than a stopping rule that only sees a single sample at a time.

\windowon*

\begin{proof}
First, let $p$ denote a probability in $[0,1]$ (to be chosen later to facilitate our proof). Let $m = np$, and let $D_m$ be a discrete distribution with finite support and $c$ be the optimal competitive ratio for the i.i.d. prophet inequalities problem $m$ samples drawn from $D_m$. Now, define $D_m'$ to be the ``zero-padded" version of $D_m$; that is, let $D_m'$ be the distribution that with probability $p$ draws a sample from $D_m$ and takes the value 0 with probability $1-p$. We will analyze the competitive ratio for the window size $k$ prophet inequalities setup with $n$ samples drawn from $D_m'$.

First, suppose the gambler encounters a sample drawn from $D_m$ (which occurs with probability $p$ for a given index). The probability that the gambler encounters another sample drawn from $D_m$ in the next $k-1$ samples (i.e. in the same window starting from the first sample from $D_m$) is at most $kp$ by the union bound. Thus, with probability at most $kp^2$, a given sample is both drawn from $D_m$ and within $k$ upcoming samples of another sample drawn from $D_m$. Therefore, by the union bound, with probability at most $nkp^2$, there is a sample that is both drawn from $D_m$ and within $k$ samples of another sample from $D_m$. 

We want to choose $p$ such that $nkp^2$ is $o(1)$, while $m = np$ is $\omega(1)$. Since $k$ is $o(n)$, we can choose such a $p$.

With probability at least $1 - nkp^2$, no two samples from $D_m$ will be visible within the same window. Using our choice of $p$, as $n$ gets large, this probability goes to 1, and importantly $m$ gets large as well. In the $n$ samples drawn in the window setup described here, the expected number of samples drawn from $D_m$ is $np = m$. Let $m^\ast$ be the number of samples actually drawn from $D_m$. We can use Chernoff bounds to bound the probability of deviation from this expectation. This yields the bounds
$$\operatorname{Pr}\left[m^{*} \geq m\left(1+m^{-1 / 4}\right)\right] \leq e^{-\sqrt{m} / 2}$$
$$\operatorname{Pr}\left[m^{*} \leq m\left(1-m^{-1 / 4}\right)\right] \leq e^{-\sqrt{m} / 3}.$$
Now that we have these, we can analyze the competitive ratio of the gambler given that she draws $m^\ast$ samples from $D_m$ in the windowed prophet inequalities setup on $n$ samples (for large $n$) with window size $k = o(n)$. Since $m$ gets large as $n$ gets large, we have that by our Chernoff bounds the probability that $(1-m^{-1/4})m \le m^* \le (1+m^{-1/4})m$ goes to 1. We now split into two cases. Suppose $m \leq m^{*} \leq m\left(1+m^{-1 / 4}\right)$. Then the gambler achieves a factor of at most $1+m^{-1/4}$ times her original reward. However, $m^{-1/4}$ is $o(1)$ as $n$ gets large, so the gambler's competitive ratio remains the same. Similarly, if $m\left(1-m^{-1 / 4}\right) \leq m^{*} \leq m$, then the prophet could potentially do worse by a factor of $\left(1-m^{-1 / 4}\right)$, but as $n$ gets large, the $m^{-1/4}$ term goes to 0. Thus, in either case, the competitive ratio is changing by a factor of at most 1 - $o(1)$, which ensures the same competitive ratio with $m^\ast$ samples from $D_m$ as $m$ samples from $D_m$. 

We can now conclude that the competitive ratio of the windowed prophet inequalities problem with $n$ samples drawn from $D_m'$ is $c$ (the same as the competitive ratio for the i.i.d. prophet inequalities problem with $m$ samples drawn from $D_m$). If the gambler could achieve some competitive ratio strictly greater than $c$ in this setup, she could receive a competitive ratio strictly greater than $c$ in the i.i.d. prophet inequalities problem by padding her input with zeros and using the window stopping rule to select her element. This is impossible, so the competitive ratio for the windowed prophet inequalities setup is bounded above by $c$. As mentioned before, we can certainly achieve a competitive ratio of at least $c$ by simulating the standard prophet inequality algorithm on the input. Let $\D_{m_1}, \D_{m_2}, \ldots$ be the sequence of distributions given in \cite{HK82} such that the competitive ratios of i.i.d. prophet inequalities problem with $m$ samples from a distribution from the sequence approaches $\alpha_m$. Letting $D_m = \D_{m_i}$ and letting $i$ approach infinity causes $c$ to approach $\alpha$. Thus, $WIN_k$ for window size $k=o(n)$ is $\alpha$, as desired.
\end{proof}
\section{Tighter Upper Bound for Theorem~\ref{thm:windowcf}}
\label{section:windowcftight}
In this section, we make a slight modification to the proof of Lemma \ref{lemma:clean_ub} to make it tighter. Let $l$, $k$, and $\A$ be as defined previously. We consider a similar reduction. Note that in the proof of the cleaner bound, we defined the failure case $\overline Y$ to be the event that at least one of the last $w$ elements in a batch is nonzero. However, the more accurate failure case is when $\A$ (an arbitrary stopping rule for the windowed setting) decides to accept an element, but $\B$ (the constructed batch stopping rule from $\A$) can not. In particular, we can assume that within batch $l-i+1$ ($i$th to last batch), $\A$ will only accept a sample larger than $V_{i-2}(X)$ since we can guarantee this payoff by running the optimal batch algorithm starting at batch $l-i+1$. We can thus change the failure case to be $\overline{Y}'$, the event that for any $i$ at least one of the last $w$ elements in batch $l-i+1$ is larger than $V_{i-2}(X)$. 

As before but replacing $\overline Y$ with $\overline Y'$, we have 
\begin{align*}
WIN_{n/k}   &\le \alpha_l +\epsilon + \frac{E[X_\A \mid \overline{Y}']}{E_l[X]} Pr[\overline{Y}'] \\
        &\le \alpha_l + \epsilon + \frac{E_l[X \mid \overline{Y}']}{E_l[X]} Pr[\overline{Y}'] \\
\end{align*}

Define $Z'_i$ to be the event that one of the samples at the end of batch $l-i+1$ is strictly larger than $v_{i-2}$ (so $\overline{Y}' = \cup_{i=1}^l Z'_i$). Then
\[
E_l[X \mid \overline{Y}'] Pr[\overline{Y}']   = \sum_{j} E_l[X \mid \overline{Z}_{j+1}, \ldots, \overline{Z}_l, Z_j] Pr(Z_j) \prod_{f > j} Pr(\overline{Z_f})
\]
Note that $E_l[X \mid \overline{Z}_{j+1}, \ldots, \overline{Z}_l, Z_j] \le E_l[X \mid {Z}_j]$. Let $D_{l, \delta}$ be the extremal distribution as defined in \cite{HK82} such that the competitive ratio for $l$ samples from $D_{l, \delta}$ is at most $\alpha_l + \frac{\delta \alpha_l^2}{1-\delta \alpha_l}$. Let $D'$ be the distribution such that the max of $n/k$ samples from $D'$ is distributed identically to the max of $n/l$ samples from $D_{l, \delta}$. Let $v_1, \ldots, v_k$ be the values that a sample from $D'$ might take, and $s_i$ be the probability that a sample from $D'$ is less than or equal to the $v_i$. Then

\[
E_l[X \mid {Z}_j] \le \sum_{i=j-2}^l v_i \left(s_i^{k-1} \left(\frac{s_i - s_{j-2}}{1 - s_{j-2}}\right) - s_{i-1}^{k-1} \left(\frac{s_{i-1} - s_{j-2}}{1 - s_{j-2}}\right)\right)
\]
Substituting this into our initial inequality for $WIN_{n/k}$, we have
\[
WIN_{n/k} \le \alpha_l + \frac{\delta\alpha_l^2 }{1-\delta\alpha_l } + \frac{1}{E_l[X]} 
\sum_j \sum_{i=j-2}^l v_i \big(s_i^{k-1} \left(s_i - s_{j-2} \right) - s_{i-1}^{k-1} \left(s_{i-1} - s_{j-2} \right) \big)
\prod_{f > j} s_{f-1}
\]

This holds for any choice of $l$ and $\delta$, and \cite{HK82} describe how to approximate each $s_i$ and $v_i$ as well as $E_l[X]$ and $\alpha_l$. Figure \ref{fig:graph} approximates this expression and optimizes the choice of $l$ to compute an upper bound on the competitive ratio for the windowed prophet inequalities problem. 
\end{appendix}
\end{document}